\newcommand{\declarecolor}[2]{\definecolor{#1}{RGB}{#2}\expandafter\newcommand\csname #1\endcsname[1]{\textcolor{#1}{##1}}}
\newcommand{\tv}[1]{\left\|#1\right\|_{\normalfont\text{TV}}}
\newcommand{\poly}{\normalfont\text{poly}}
\newcommand{\Prob}[1]{\Pr\parens*{#1}}
\newcommand{\ProbCond}[2]{\Pr\parens*{#1 \;\middle|\; #2}}
\newcommand{\DEF}{\stackrel{\textnormal{\tiny\sffamily def}}{=}}
\newcommand{\Z}{\mathbb{Z}}
\newcommand{\R}{\mathbb{R}}
\newcommand{\CR}{C_{\textnormal{R}}}
\newcommand{\CFL}{C_{\textnormal{FL}}}
\newcommand{\CAFL}{C_{\textnormal{AFL}}}
\newcommand{\CG}{C_{\textnormal{G}}}
\newcommand{\xR}{x_{\textnormal{R}}}
\newcommand{\xG}{x_{\textnormal{G}}}
\newcommand{\OMEGALEFT}{\Omega_{\textnormal{LEFT}}}
\newcommand{\OMEGAMIDDLE}{\Omega_{\textnormal{MIDDLE}}}
\newcommand{\OMEGARIGHT}{\Omega_{\textnormal{RIGHT}}}
\newcommand{\cM}{\mathcal{M}}
\DeclarePairedDelimiter{\abs}{\lvert}{\rvert}
\DeclarePairedDelimiter{\set}{\{}{\}}
\DeclarePairedDelimiter{\parens}{(}{)}
\DeclarePairedDelimiter{\bracks}{[}{]}
\DeclarePairedDelimiter{\floor}{\lfloor}{\rfloor}
\DeclarePairedDelimiter{\norm}{\lVert}{\rVert}
\theoremstyle{plain}
\newtheorem{theorem}{Theorem}[section]
\newtheorem{lemma}[theorem]{Lemma}
\theoremstyle{definition}
\newtheorem{remark}[theorem]{Remark}
\newcommand*\wthelper[2]{%
        \hbox{\dimen@\accentfontxheight#1%
                \accentfontxheight#11.1\dimen@
                $\m@th#1\widetilde{#2}$%
                \accentfontxheight#1\dimen@
        }%
}
\newcommand*\accentfontxheight[1]{%
        \fontdimen5\ifx#1\displaystyle
                \textfont
        \else\ifx#1\textstyle
                \textfont
        \else\ifx#1\scriptstyle
                \scriptfont
        \else
                \scriptscriptfont
        \fi\fi\fi3
}
\begin{document}

\title{Slow Mixing of Glauber Dynamics for the Six-Vertex Model in the Ordered
Phases}

\author{Matthew Fahrbach\thanks{
Email: \href{mailto:matthew.fahrbach@gatech.edu}{\textsf{matthew.fahrbach@gatech.edu}}.
Supported in part by an NSF Graduate Research Fellowship under grant DGE-1650044.
}\hspace{0.05cm}}
\author{Dana Randall\thanks{
Email: \href{mailto:randall@cc.gatech.edu}{\textsf{randall@cc.gatech.edu}}.
Supported in part by NSF grants CCF-1526900, CCF-1637031, and CCF-1733812.
}}
\affil{School of Computer Science, Georgia Institute of Technology}

\maketitle

\begin{abstract}
The six-vertex model in statistical physics is a weighted generalization of
the ice model on $\Z^2$ (i.e., Eulerian orientations) and
the zero-temperature three-state Potts model  (i.e., proper three-colorings).
The phase diagram of the model depicts its physical properties
and suggests where local Markov chains will be efficient.
In this paper, we analyze the mixing time of Glauber dynamics for the six-vertex
model in the ordered phases.  Specifically, we show that for
all Boltzmann weights in the \emph{ferroelectric phase}, there exist boundary
conditions such that local Markov chains require exponential time to converge
to equilibrium.  This is the first rigorous result bounding the mixing time of
Glauber dynamics in the ferroelectric phase.
Our analysis demonstrates a fundamental connection between
correlated random walks and the dynamics of intersecting lattice path models (or routings).
We analyze the Glauber dynamics for the six-vertex model with
free boundary conditions in the \emph{antiferroelectric phase} and significantly
extend the region for which local Markov chains are known to be slow mixing.
This result relies on a Peierls argument and novel properties of weighted
non-backtracking walks.
\end{abstract}

\pagenumbering{gobble}
\clearpage
\pagenumbering{arabic}

\section{Introduction}
\label{sec:introduction}

The \emph{six-vertex model} was first introduced by Pauling in
1935~\cite{pauling} to study the thermodynamics of crystalline solids with
ferroelectric properties, and has since become one of the most compelling
models in statistical mechanics.
The prototypical instance of the model is the hydrogen-bonding
pattern of two-dimensional ice---when water freezes, each oxygen atom must
be surrounded by four hydrogen atoms such that two of the hydrogen atoms
bond covalently with the oxygen atom and two are farther away.  The state space of
the six-vertex model consists of orientations of the edges in a
finite region of the two-dimensional square lattice where every internal vertex
has two incoming edges and two outgoing edges, also represented as Eulerian
orientations of the underlying lattice graph.
The model is most often studied on the $n \times n$ square
lattice~$\Lambda_n \subseteq \Z^2$ with~$4n$ additional edges so that each
internal vertex has degree~4.
There are six possible edge orientations incident to a vertex (see \Cref{fig:six}).
We assign Boltzmann weights $w_1,w_2,w_3,w_4,w_5,w_6 \in \R_{> 0}$ to the
six vertex types and define
the partition function as
$Z = \sum_{x \in \Omega} \prod_{i=1}^6 w_i^{n_i(x)}$,
where~$\Omega$ is the set of Eulerian orientations of~$\Lambda_n$
and $n_i(x)$ is the number of type-$i$ vertices in the configuration $x$.

\begin{figure}[H]
  \vspace{0.45cm}
  \centering
  \includegraphics[width=0.85\linewidth]{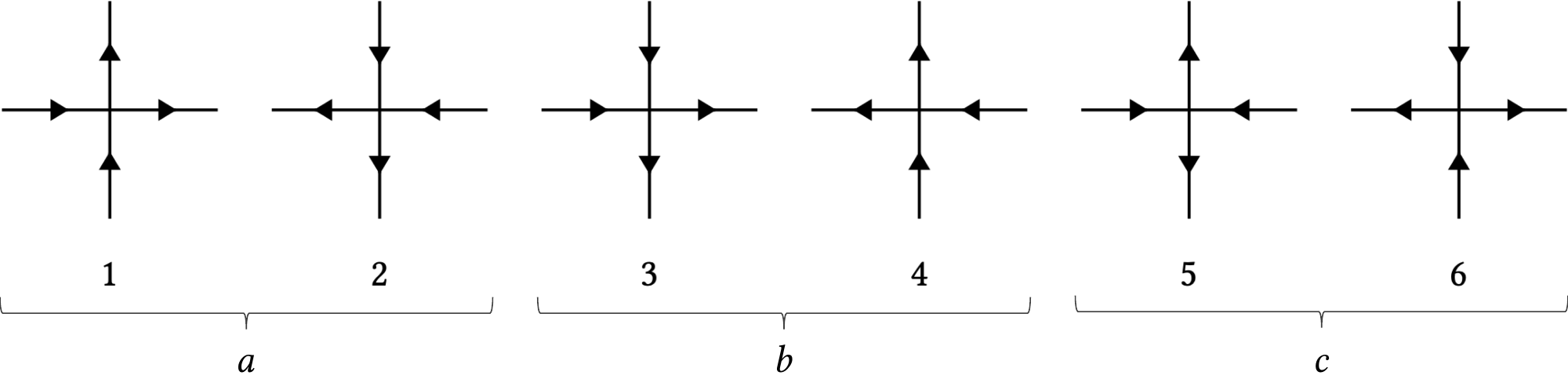}
  \vspace{0.10cm}
  \caption{The six valid edge orientations for internal vertices in the six-vertex model
  and their corresponding Boltzmann weights.}
  \label{fig:six}
\end{figure}

In 1967, Lieb discovered exact solutions to the six-vertex model with
periodic boundary conditions (i.e., on the torus) for three different parameter
regimes~\cite{lieb1967exact,lieb1967kdp,lieb1967residual}.
In particular, he famously showed that if all six vertex weights are set to $w_i = 1$,
the energy per
vertex is $\lim_{n\rightarrow\infty} Z^{1/n^2} = \parens*{4/3}^{3/2} = 1.5396007...$,
which is known as ``Lieb's square ice constant''.
His results were immediately generalized to all parameter regimes
and to account for external electric fields~\cite{sutherland1967exact,yang1967exact}.
An equivalence between periodic and free boundary conditions in the limit
was established in~\cite{brascamp1973some}, and since then the primary
object of study has been the six-vertex model subject to
\emph{domain wall boundary conditions},
where the lower and upper boundary edges point into the square
and the left and right boundary edges point outwards~\cite{izergin1992determinant,korepin2000thermodynamic,bogoliubov2002boundary,bleher2006exact,bleher2009exact,bleher2010exact}.
The six-vertex model serves as an important ``counterexample''
in statistical physics because the \emph{surface free energy} in the thermodynamic
limit depends on the boundary conditions. In particular, it is different
for periodic and domain wall boundary conditions.

There have been several surprisingly profound connections
to combinatorics and probability in this line of work.
For example, Zeilberger gave a sophisticated computer-assisted proof of the
\emph{alternating sign matrix conjecture} in 1995~\cite{zeilberger1996proof}.
A year later,
Kuperberg~\cite{kuperberg1996another} produced an elegant and significantly
shorter proof using analysis of the partition function of the six-vertex model
with domain wall boundary conditions.
Other connections to combinatorics
include the dimer model on the Aztec diamond and the arctic circle
theorem~\cite{cohn1996,ferrari2006domino},
sampling lozenge tilings~\cite{luby2001markov,wilson2004mixing,bhakta2017approximately},
and counting 3-colorings of lattice graphs~\cite{rt,cannon2016sampling}.

While there has been extraordinary progress in understanding properties of the
six-vertex model with periodic or domain wall boundary conditions
in mathematical physics,
remarkably less is known when the model is subject to arbitrary boundary conditions.
Sampling configurations using Markov chain Monte Carlo (MCMC) algorithms
has been one of the primary means for discovering mathematical and
physical properties of the six-vertex
model~\cite{allison2005numerical,lyberg2017density,lyberg2018phase,keating2018random,belov2020two}.
However, the model is empirically very sensitive to boundary conditions,
and numerical studies have often observed slow convergence of local MCMC algorithms
under certain parameter settings. For example, according to~\cite{lyberg2018phase},
``it must be stressed that the Metropolis algorithm might be impractical
in the antiferromagnetic phase, where the system may be unable to thermalize.''
There are very few rigorous results about natural Markov chains and the
computational complexity of sampling from the six-vertex model
when the Boltzmann distribution is nonuniform,
thus motivating our study of Glauber dynamics for the six-vertex model, the most widely used MCMC
sampling algorithm, in the \emph{ferroelectric} and
\emph{antiferroelectric phases}.

At first glance, the model has six degrees of freedom. However, this
conveniently reduces to a two-parameter family because of invariants
that relate pairs of vertex types.
To see this, it is useful to view the configurations of the six-vertex model as 
intersecting lattice paths by erasing all of the edges that are directed
south or west and keeping the others (see \Cref{fig:example-states}).
Using this bijective ``routing interpretation,'' it is simple to see that 
the number of type-5 and type-6 vertices must be closely correlated.
In addition to revealing invariants, the lattice path
representation of configurations turns out to be exceptionally useful
for analyzing Glauber dynamics.
Moreover, the total weight of a configuration should remain unchanged if all the
edge directions are reversed in the absence of an external electric field,
so we let $w_1 = w_2 = a$, $w_3 = w_4 = b$, and $w_5 = w_6 = c$.
This complementary invariance is known as the \emph{zero field assumption},
and it is often convenient to exploit the conservation laws of the
model~\cite{bleher2009exact} to reparameterize the system so that $w_1 = a^2$
and $w_2 = 1$.
This allows us to ignore empty sites and focus solely on weighted lattice
paths.
Furthermore, since our goal is to sample configurations from the Boltzmann
distribution, we can normalize the partition function by a factor of~$c^{-n^2}$
and consider the weight $(a/c, b/c, 1)$ instead of the parameter $(a,b,c)$.
Collectively, we refer to these properties as the invariance of the Gibbs
measure for the six-vertex model.

\begin{figure*}
\centering
\begin{subfigure}{0.495\textwidth}
  \centering
  \includegraphics[width=0.75\linewidth]{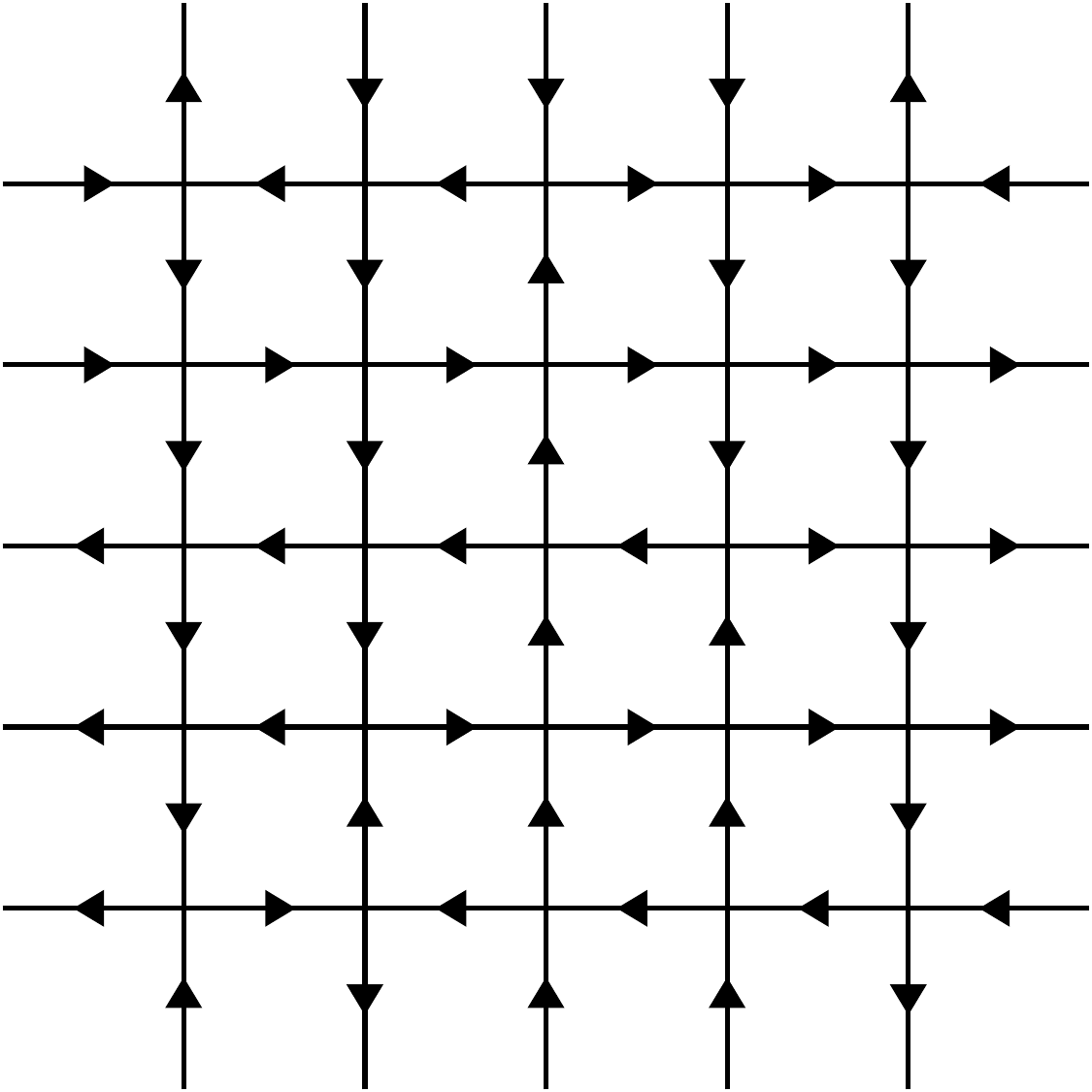}
  \caption{}
  \label{fig:example-state-1}
\end{subfigure}
\hspace{-0.55cm}
\begin{subfigure}{0.495\textwidth}
  \centering
  \includegraphics[width=0.75\linewidth]{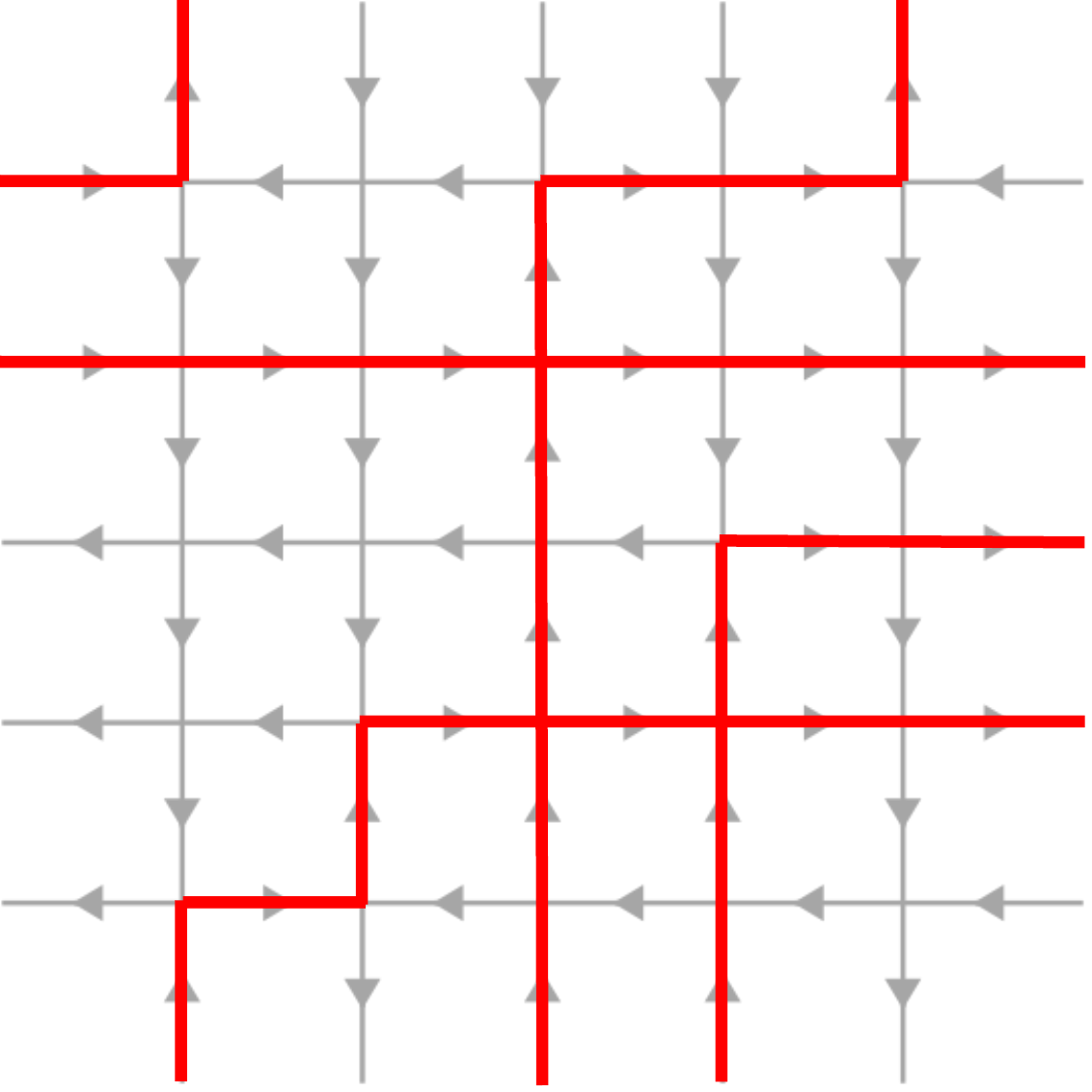}
  \caption{}
  \label{fig:example-state-2}
\end{subfigure}
  \caption{Examples of a configuration in the six-vertex model:
  (a) illustrates the edge orientations and internal Eulerian constraints, and
  (b) overlays the corresponding routing interpretation in red.}
\label{fig:example-states}
\end{figure*}

The single-site \emph{Glauber dynamics} for the six-vertex
model is the Markov chain that makes local moves by (1) choosing an internal cell
of the lattice uniformly at random and (2) reversing the orientations of the
edges that bound the chosen cell if they form a cycle. 
In the lattice path interpretation, these dynamics correspond to
the ``mountain-valley'' Markov chain that flips corners.
Transitions between states are made according to the Metropolis-Hastings
acceptance probability~\cite{metropolis1953equation} so that the Markov chain
converges to the desired distribution.

The phase diagram of the six-vertex model represents distinct
thermodynamic properties of the system and is partitioned into three regions:
the \emph{disordered} (DO) phase,
the \emph{ferroelectric} (FE) phase,
and the \emph{antiferroelectric} (AFE) phase.
To establish these regions, we consider the parameter
\[
  \Delta = \frac{a^2 + b^2 - c^2}{2ab}.
\]
The disordered phase is the set of parameters $(a,b,c) \in \R_{> 0}^3$ that
satisfy $|\Delta| < 1$, and Glauber dynamics is expected to be rapidly mixing
in this region because there are no long-range correlations in the system.
The ferroelectric phase is defined by $\Delta > 1$,
or equivalently when
we have $a > b + c$ or $b > a + c$.
The antiferroelectric phase is defined by $\Delta < -1$, or
equivalently when $a + b < c$.

The phase diagram is symmetric over the positive diagonal, which
follows from the fact that $a$ and $b$ are interchangeable under the automorphism 
that rotates each of the six vertex types by
ninety degrees clockwise.  This is equivalent
to rotating the entire model under the zero field assumption.
Therefore, we can assume
that mixing results are symmetric over the main diagonal.
Combinatorially, we show in~\Cref{sec:ferroelectric} that
configurations in the ferroelectric phase can be interpreted as intersecting lattice paths that
prefer to adhere to each other. We carefully exploit this property to
show that Glauber dynamics slow mixing.
In the antiferroelectric phase, configurations prefer vertices of type-$c$
and tend to be closely aligned with one of states with maximum probability
that are arrow reversals of each other.

\begin{figure*}
\centering
\begin{subfigure}{0.32\textwidth}
  \centering
  \includegraphics[width=1.00\linewidth]{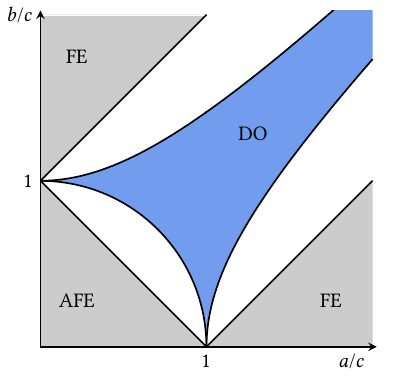}
  \caption{}
  \label{fig:cll-approximability}
\end{subfigure}
\begin{subfigure}{0.32\textwidth}
  \centering
  \includegraphics[width=1.00\linewidth]{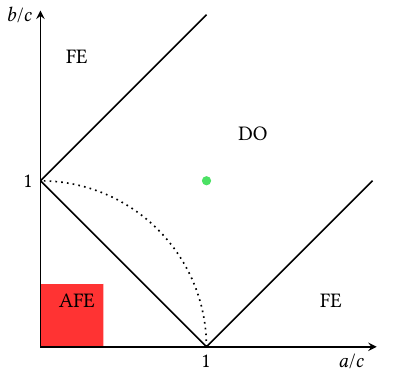}
  \caption{}
  \label{fig:phase-diagram-liu}
\end{subfigure}
\hspace{-0.25cm}
\begin{subfigure}{0.32\textwidth}
  \centering
  \includegraphics[width=1.00\linewidth]{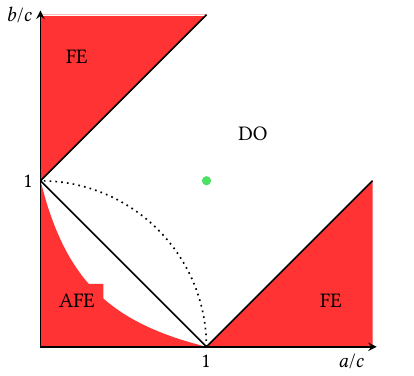}
  \caption{}
  \label{fig:phase-diagram-us}
\end{subfigure}
\caption{
Diagram (a) shows the complexity of approximating the
partition function of the six-vertex model for 4-regular graphs.
There exists an FPRAS in the blue subregion of the disordered phase,
and there cannot exist an FRPAS in the ferroelectric or antiferroelectric (gray)
regions unless $\textbf{RP} = \textbf{NP}$.
Diagram (b) shows the previously known slow mixing regions of Glauber dynamics
in red, and diagram (c) shows the current slow mixing regions.
Glauber dynamics is conjectured to be rapidly mixing in all of the disordered
phase, but it has only been shown for the uniform distribution
indicated by the green point $(1,1)$.
}
\label{fig:phases}
\end{figure*}

\subsection{Related Works}

Cai, Liu, and Lu~\cite{cai2019approximability} recently
investigated the six-vertex model for 4-regular graphs
and provided strong evidence that the complexity of approximating
the partition function agrees with the phase diagram from statistical physics.
In particular, they give a \emph{fully randomized approximation scheme}
(FPRAS)
for all 4-regular graphs in the subregion of the disordered phase
defined by the inequalities $a^2 \le b^2 + c^2$, $b^2 \le a^2 + c^2$, and
$c^2 \le a^2 + b^2$ (i.e., the blue region in~\Cref{fig:cll-approximability}).
Their algorithm builds on the \emph{winding technique} for Holant
problems developed in~\cite{mcquillan2013approximating,huang2016canonical}
and requires $O(n^{10})$ time to sample a six-vertex configuration from the
Boltzmann distribution,
where~$n$ is the number of vertices in the graph.
The Markov chain they use is not Glauber dynamics,
but rather a \emph{directed loop algorithm} whose state space is augmented
with ``near-perfect'' configurations that slightly
violate the Eulerian orientation constraint.
This Markov chain can be understood as gradually reversing a large directed loop in a valid six-vertex
configuration, whereas Glauber dynamics is restricted to reversing
cycles that form the perimeter of a cell.
Cai, Liu, and Lu also showed that an FPRAS for 4-regular graphs cannot exist in the
ferroelectric or antiferroelectric regions unless $\textbf{RP} = \textbf{NP}$
(i.e., the gray regions in~\Cref{fig:cll-approximability}).
Their hardness results use nonplanar 4-regular gadgets to
reduce from 3-MIS, the \textbf{NP}-hard problem of computing the cardinality
of a maximum independent set in a 3-regular graph~\cite{garey1974some},
and therefore so not directly reveal anything about the mixing time of
Glauber dynamics for the six-vertex model on regions of $\Z^2$.
A dichotomy
theorem for the (exact) computability of the partition function of
the six-vertex model on 4-regular graphs was also recently proven
in~\cite{cai2018complexity}.

As for the positive results about the mixing time of Glauber dynamics,
Luby, Randall, and Sinclair~\cite{luby2001markov} proved rapid mixing of a Markov chain
that leads to a fully polynomial almost uniform sampler for
Eulerian orientations on any region of the Cartesian
lattice with fixed boundaries (i.e., the unweighted case when $a/c=b/c=1$).
Randall and Tetali~\cite{rt} then used a comparison technique to argue that
Glauber dynamics for Eulerian orientations on lattice graphs is rapidly
mixing by relating this Markov chain to the Luby-Randall-Sinclair chain.
Goldberg, Martin, and Paterson~\cite{goldberg2004random} extended their
approach to show that Glauber dynamics is rapidly mixing on rectangular lattice
regions with free boundary conditions.

Liu~\cite{liu} gave the first rigorous result showing that Glauber dynamics
can be slowly mixing in a subregion of an ordered phase.
In particular, Liu showed that local Markov chains subject to free boundary
conditions require exponential time to converge to stationarity
in the antiferroelectric subregion
defined by $\max(a,b) < c / \mu$ (i.e., the red region in \Cref{fig:phase-diagram-liu}),
where $\mu = 2.6381585...$ is the \emph{connective constant} for self-avoiding walks on the square lattice.
We note that the connective constant is defined by the limit
$\mu = \lim_{n \rightarrow \infty} \gamma_n^{1/n}$, where $\gamma_n$ is the
number of self-avoiding walks of length $n$ on the square lattice.
Liu also showed that the directed loop algorithm used in~\cite{cai2019approximability}
mixes slowly in the same antiferroelectric subregion
and for all of the ferroelectric region.
This, however, has no bearing on the efficiency
of Glauber dynamics in the ferroelectric region.
As an aside, we also remark that the partition function is exactly
computable for all boundary conditions at the free-fermion point when $\Delta=0$,
or equivalently $a^2 + b^2 = c^2$, via a reduction to domino
tilings and a Pfaffian computation~\cite{ferrari2006domino}.

\subsection{Main Results}
In this paper we show that there exist boundary conditions for which Glauber
dynamics mixes slowly for the six-vertex model in the ferroelectric and
antiferroelectric phases.  We start by proving that there are boundary conditions
that cause Glauber dynamics to be slow for all Boltzmann weights that lie in
the ferroelectric region of the phase diagram, where the mixing time is
exponential in the number of vertices in the lattice.
This is the first rigorous result for the mixing time of Glauber dynamics in
the ferroelectric phase and it gives a complete characterization.

\begin{restatable}[Ferroelectric Phase]{theorem}{ferroelectricThm}
\label{thm:ferroelectric}
For any $(a,b,c) \in \R^3_{> 0}$ such that
$a > b + c$ or $b > a + c$,
there exist boundary conditions for which Glauber
dynamics mixes exponentially slowly on~$\Lambda_{n}$.
\end{restatable}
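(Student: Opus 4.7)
The plan is a standard bottleneck/conductance argument. I would exhibit a set $S$ of configurations with stationary mass $\pi(S) \leq 1/2$ but whose one-step Glauber boundary $\partial S$ satisfies $\pi(\partial S)/\pi(S) \leq e^{-\Omega(n)}$; Cheeger's inequality then forces $t_{\mathrm{mix}} = e^{\Omega(n)}$. By the diagonal symmetry of the phase diagram noted in the excerpt (swapping $a$ and $b$), we may assume without loss of generality that $a > b + c$, so that vertex types~1 and~2 are strictly favored over any combination involving a single non-$a$ vertex.

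First, I would design boundary conditions admitting two distinct ``ordered'' ground states that are related by an involution and therefore carry equal Boltzmann weight. Working in the lattice-path representation (erase south- and west-pointing edges), type-1 and type-2 vertices are ``straight'' traversals, while type-3, 4, 5, 6 vertices cost factors of $b/a$ or $c/a$ relative to the ground state. The idea is to pick boundaries so that the paths are forced to traverse the domain but are free to route through the ``upper'' or ``lower'' half of the lattice; the two ordered phases then correspond to all paths routing through one half or the other, and the symmetry of the boundary guarantees equal weights for the two phases.

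Next, I would define the cut using a signed-area or center-of-mass statistic of the path configuration, letting $S$ be the configurations whose statistic lies below its median so that $\pi(S) = 1/2$ by symmetry. Any $x \in \partial S$ must then contain a horizontal ``fault line'' separating the two phases, and this fault line consists of a connected interface of non-$a$ vertices of total length at least $n$. A Peierls-style bound gives
\[
\pi(\partial S) \;\leq\; \sum_{\gamma} \bracks*{\parens*{b/a}^{s(\gamma)} \parens*{c/a}^{k(\gamma)}},
\]
where the sum is over fault-line geometries $\gamma$ and $s(\gamma), k(\gamma)$ count straight-type and corner-type interface vertices. The combinatorial constraint that corners must occur in matched pairs (to preserve the horizontal endpoints of the interface) is exactly what lets this sum be controlled.

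The main obstacle is uniformity as $(a,b,c)$ approaches the critical line $a = b + c$: there each interface vertex contributes a factor arbitrarily close to~$1$, so a naive energy-versus-entropy comparison fails. The resolution is to model the interface itself as a correlated random walk, which is the connection highlighted in the abstract. Writing $p_1 = b/a$ and $p_2 = c/a$, the transfer matrix generating fault lines has spectral radius bounded above by something of the form $p_1 + p_2 + o(1) < 1$ precisely when $a > b + c$, so the total weight of fault-line geometries of length at least $n$ decays exponentially in~$n$. Combining this with the symmetric construction of $S$ yields $\pi(\partial S) \leq e^{-\Omega(n)}$ throughout the open ferroelectric region, completing the conductance argument. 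The delicate step I expect to be hardest is verifying that the interface really is forced to have macroscopic length $\Omega(n)$ (rather than a constant), which should follow from the topology of lattice paths and the chosen boundary conditions once the fault line is defined carefully.
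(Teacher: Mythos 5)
Your proposal transplants the paper's \emph{antiferroelectric} strategy (two symmetric ground states, a median cut, a fault-line Peierls argument) onto the ferroelectric phase, but the paper's ferroelectric proof is structurally quite different, and I think your plan has two real gaps.

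First, the cut. The paper does not define $S$ by a global order parameter with $\pi(S)=1/2$; it constructs an explicitly \emph{asymmetric} bottleneck. The boundary conditions plant $2\ell+1 = \Theta(n^{1/8})$ tethered lattice paths separated by $d = \Theta(n^{3/4})$, and $S$ is the set of configurations in which \emph{every} path deviates by less than $8n^{3/4}$ from its diagonal. By construction no two paths in $S$ can intersect, so the generating function over $S$ factors into a product of single-path generating functions. The heavy side of the cut is $\Omega\setminus S$, which contains the ferroelectric ground state where neighboring paths intersect maximally; showing $\pi(S)\le 1/2$ is an energy-versus-entropy comparison (\Cref{lem:escape-cut-mass}), using exactly the assumption $\lambda = a/c > 1 + b/c$. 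There is no pair of involution-related ordered phases here, and no ``interface'' between two symmetric ground states: the favored vertex type is the intersection, and the ground state is a single highly-intersecting configuration, not one of two equal-weight halves. Your claim that types~1 and~2 are ``straight'' traversals is also off --- in the lattice-path picture, type~1 is an empty site and type~2 is a full crossing; it is types~3 and~4 that are straights.

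Second, the escape bound. In the paper, exiting $S$ means some path's deviation reaches $\approx 4n^{3/4}$; the paths inside $S$ are independent, so the dynamics of one path are exactly those of a correlated random walk with momentum parameter $\mu = b/c$, and the escape probability is controlled by the tail bound $\Prob{\text{deviation}\ge 2m}\le e^{-(1-\varepsilon)m^2/(\mu n)}$ developed in \Cref{sec:tail-behavior}. This works for \emph{every} $\mu>0$ --- no threshold on $\mu$ appears --- and the ferroelectric condition $\lambda > 1+\mu$ enters only through the balance $\pi(S)\le 1/2$, not through the escape probability. Your proposal instead wants a transfer-matrix bound on weighted fault lines with spectral radius below one exactly when $a>b+c$. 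That is unlikely to reach the full ferroelectric region: already in the antiferroelectric phase the paper's more careful non-backtracking-walk generating function only reaches $3ab + ac + bc < c^2$, short of the conjectured $a+b<c$. A naive interface count for the ferroelectric case with straight weight $b/a$ and corner weight $c/a$ gives something like $b/a + 2c/a < 1$, i.e.\ $a > b + 2c$, not $a > b+c$. The reason the paper can reach the critical line is precisely that it avoids a Peierls argument in the ferroelectric phase: the exponential smallness of $\Phi(S)$ comes from the deviation tail of a \emph{single} path, not from summing over interface geometries, and the threshold comes from the separate mass-comparison lemma.

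There is also an unjustified step in your argument: it does not follow from $S$ being a sub-level set of a signed-area statistic that every $x\in\partial S$ contains an $\Omega(n)$-length topological fault line. In the antiferroelectric case that structure is forced by \Cref{lem:liu-faultline}, which relies on the red/green edge coloring and the dual-lattice topology; nothing analogous is provided by a median cut on a continuous statistic. If you want to pursue a Peierls route in the ferroelectric phase you would need a genuinely new topological characterization of $\partial S$, and even then you should not expect to reach the threshold without the correlated-random-walk machinery the paper builds.
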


\noindent
We note that our approach naturally breaks down at the critical line
of the conjectured phase diagram for the mixing time
in a way that
reveals a trade-off between the energy and entropy of the system.
Additionally, our analysis suggests an underlying
combinatorial interpretation for the phase transition between the ferroelectric 
and disordered phases in terms of the adherence strength of intersecting lattice paths
and the momentum parameter of correlated random walks.

Our second mixing result builds on the topological obstruction framework
developed in~\cite{ran-top} to show that Glauber dynamics with free boundary
conditions mixes slowly in most of the antiferroelectric region.
Specifically, we generalize the recent antiferroelectric mixing result
in~\cite{liu} with a Peierls argument that uses multivariate generating
functions for weighted non-backtracking walks instead of the connectivity
constant for (unweighted) self-avoiding walks to better account for
the discrepancies in Boltzmann weights.

\begin{restatable}[Antiferroelectric Phase]{theorem}{antiferroelectricThm}
\label{thm:antiferroelectric}
For any $(a,b,c) \in \R^3_{> 0}$ such that
$ac + bc + 3ab < c^2$,
Glauber dynamics mixes exponentially slowly on $\Lambda_{n}$
with free boundary conditions.
\end{restatable}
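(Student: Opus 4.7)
The plan is to follow the topological-obstruction Peierls scheme of \cite{ran-top} as refined by Liu~\cite{liu} for the antiferroelectric phase, but to replace the enumeration of fault paths via the self-avoiding-walk connective constant $\mu$ by a sharper transfer-matrix bound for \emph{weighted non-backtracking walks} on the square lattice. This refinement is what allows the hypothesis to be stated as a joint condition on $a,b,c$ rather than on the single parameter $\max(a,b)/c$.

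Working in the intersecting-lattice-path representation with free boundary conditions, I would first introduce a macroscopic statistic $\Phi(x)$ that distinguishes the two antiferroelectric ``checkerboard'' ground states (the two maximally packed crossings saturated by type-5 vs.\ type-6 vertices), and partition $\Omega = \Omega_- \cup \Omega_0 \cup \Omega_+$ accordingly, with $\Omega_0$ consisting of configurations straddling the boundary. The standard topological observation is that any $x \in \Omega_0$ must contain a macroscopic \emph{fault line} across which the local ground-state pattern switches. Since Glauber dynamics alters configurations only at one edge at a time, every transition from $\Omega_-$ to $\Omega_+$ passes through $\Omega_0$, so by the conductance lower bound on mixing time it suffices to show $\pi(\Omega_0) \le e^{-\Omega(n)} \min(\pi(\Omega_-), \pi(\Omega_+))$; the $a \leftrightarrow b$ and reflection symmetries then imply $\pi(\Omega_-)$ and $\pi(\Omega_+)$ are comparable, completing the setup.

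The heart of the argument is the Peierls estimate on $\pi(\Omega_0)$ via weighted non-backtracking walks. A fault line of length $L$ is encoded as a lattice path that cannot immediately reverse, each step of which carries a local weight inherited from the vertex types along the fault: roughly $a/c$ for a straight step and $b/c$ for each of the two turning directions, subject to additional local six-vertex constraints that restrict which consecutive step patterns are allowed. Encoding these weights and transitions into a transfer matrix $T$ indexed by the last-traveled direction, the total weight of fault lines of length $L$ starting at a fixed point is $\mathbf{1}^\top T^L \mathbf{1}$, which decays exponentially in $L$ precisely when the spectral radius $\rho(T) < 1$. A direct spectral computation should identify this condition with $ac + bc + 3ab < c^2$; the $3ab$ cross-term arises from short-range correlations between consecutive fault-line steps imposed by the six-vertex rules rather than from independent single-step weights. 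Summing over the $O(n^2)$ possible starting positions and $O(n)$ lengths contributes only a polynomial prefactor.

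The main obstacle will be pinning down the exact transfer matrix: one must identify which six-vertex types can appear at each step of the fault line, characterize the correlations between consecutive steps induced by the local ice-rule constraints (non-backtracking is necessary but not quite sufficient), and account for contributions from vertices immediately adjacent to but not on the fault line. Computing $\rho(T)$ and matching it with the stated criterion is exactly where the improvement over Liu's $\mu$-based bound comes from, and verifying that $\rho(T)<1$ is equivalent to $ac+bc+3ab<c^2$ is where the bulk of the technical work will lie. A secondary subtlety is choosing $\Phi$ to be robust under free boundary conditions, ensuring that Glauber dynamics cannot circumvent the fault-line bottleneck by exploiting the boundary via accumulated local moves rather than a single macroscopic contour.
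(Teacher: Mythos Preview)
Your overall architecture---a topological-obstruction partition following \cite{ran-top} and \cite{liu}, a Peierls bound on the fault-line set, and a transfer-matrix count of weighted non-backtracking walks in place of Liu's connective constant---is exactly the paper's. The genuine gap is the weight structure along a fault line, which is the paper's central new observation and is not what you have guessed.

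You propose weights ``roughly $a/c$ for a straight step and $b/c$ for each of the two turning directions.'' A direction-indexed transfer matrix with those weights has spectral radius $(a+2b)/c$, which is not symmetric in $a$ and $b$ and cannot reproduce the condition $ac+bc+3ab<c^2$. The correct mechanism is different: every vertex lying on a fault line is of type $a$ or type $b$ (these are precisely the internal vertices with two red and two green incident edges, rotationally adjacent), and the type \emph{toggles} each time the fault line turns while staying fixed on straight steps. Thus a fault line with step pattern $\textsf{SLRSSL}$ passes through vertex types $abaaab$ (or $babbba$), not $abbaab$. The natural transfer matrix is therefore indexed by the \emph{current vertex type} rather than the last direction of travel: from a vertex of type $a$ one goes straight (one way, next vertex again type $a$, weight $a$) or turns (two ways, next vertex type $b$, weight $b$), and symmetrically from type $b$. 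This $2\times2$ matrix has characteristic polynomial $\lambda^2-(a+b)\lambda-3ab$ and dominant eigenvalue $\tfrac12\bigl(a+b+\sqrt{a^2+14ab+b^2}\,\bigr)$; the condition that this be less than $c$ is exactly $ac+bc+3ab<c^2$. The $3ab$ term is simply minus the determinant of this $2\times2$ matrix, not a correction from additional ice-rule constraints on consecutive steps as you speculate. Two smaller points you will also need: the partition should be the concrete red-cross\,/\,fault-line\,/\,green-cross decomposition $\Omega=\CR\sqcup\CFL\sqcup\CG$ so that Liu's lemma applies directly, and you must include \emph{almost} fault lines since $\partial\CR\subseteq\CFL\cup\CAFL$ rather than $\CFL$ alone; this introduces a convolution of two walk generating functions but costs only a polynomial prefactor.
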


\noindent
We illustrate the new regions for which Glauber dynamics can be slowly mixing
in \Cref{fig:phases}.
Observe that our antiferroelectric subregion significantly extends Liu's and
pushes towards the conjectured threshold.

\subsection{Techniques}
We take significantly different approaches for our analysis of the ferroelectric and
antiferroelectric phases.
In the ferroelectric phase,
where $a > b + c$ and type-$a$ vertices are preferred to
type-$b$ and type-$c$ vertices,
we construct boundary conditions that induce
polynomially-many paths separated by a critical distance that
allows all of the paths to (1) behave independently and (2)
simultaneously intersect with their neighbors maximally.
(This analysis also covers the case $b > a + c$ by a standard invariant
that shows symmetry in the phase diagram over the line $y=x$.)
From here, we analyze the dynamics of a single path in isolation as an escape
probability, which eventually allows us to bound the conductance of the Markov
chain.
The dynamics of a single lattice path is equivalent to that
of a \emph{correlated random walk}. In \Cref{sec:tail-behavior}
we present a new tail inequality for
correlated random walks  that accurately bounds the
probability of large deviations from the starting position.
We note that decomposing the dynamics
of lattice models into one-dimensional random walks has recently been shown to
achieve nearly tight bounds for escape probabilities in a different
setting~\cite{durfee2018nearly}.

One of the key technical contributions in this paper is our analysis of the
tail behavior of correlated random walks in \Cref{sec:tail-behavior}.
While there is a simple combinatorial expression for the position of a
correlated random walk written as a sum of marginals, it is not immediately
useful for bounding the displacement from the origin.
To achieve an exponentially small tail bound for these walks, we first construct
a smooth function that tightly upper bounds the marginals and then optimize
this function to analyze the asymptotics of the log of the maximum marginal.
Once we obtain an asymptotic equality for the maximum marginal, we can upper
bound the deviation of a correlated random walk, and hence the deviation of a
lattice path in a configuration.
Ultimately, this allows us to show that there exists a balanced cut in the
state space that has an exponentially small escape probability, which implies
that the Glauber dynamics are slowly mixing.

In the antiferroelectric phase, on the other hand,
the weights satisfy $a + b < c$,
so {type-$c$} vertices are preferred.
It follows that there are two (arrow-reversal) symmetric ground states of
maximum probability containing only type-$c$ vertices.
To move between configurations that agree predominantly with different ground
states, the Markov chain must pass through configurations with a large number
of type-$a$ or type-$b$ vertices.
Using the idea of \emph{fault lines} introduced in \cite{ran-top},
we use \emph{weighted non-backtracking walks} to characterize such configurations
and construct a cut set with exponentially small
probability mass that separates the ground states.

\section{Preliminaries}
\label{sec:preliminaries}

We start with some background on
Markov chains and mixing times.
Let $\cM$ be an ergodic, reversible Markov chain with finite state space
$\Omega$, transition probability matrix $P$, and stationary distribution $\pi$.
The $t$-step transition probability from states $x$ to $y$ is denoted as $P^t(x,y)$.
The total variation distance between probability distributions
$\mu$ and $\nu$ on $\Omega$ is
\[
  \tv{\mu - \nu} = \frac{1}{2} \sum_{x \in \Omega}
    \abs*{\mu(x) - \nu(x)}.
\]
The \emph{mixing time} of $\cM$ is
$\tau\parens{1/4} = \min \set{t \in \Z_{\ge 0} : \max_{x \in \Omega} \tv{P^t(x, \cdot) - \pi} \le 1/4}$.
We say that $\cM$ is rapidly mixing if its mixing time is
$O(\poly(n))$,
where $n$ is the size of each configuration in the state space.
Similarly, we say that $\cM$ is slow mixing if its mixing time is
$\Omega(\exp(n^c))$ for some constant $c > 0$.

The mixing time of a Markov chain is characterized
by its \emph{conductance}
(up to polynomial factors).
The conductance of a nonempty set $S \subseteq \Omega$ is
\[
  \Phi\parens*{S} = \frac{\sum_{x \in S, y \not\in S} \pi(x) P(x,y) }{\pi(S)},
\]
and the conductance of the entire Markov chain is
$\Phi^* = \min_{S\subseteq \Omega : 0 < \pi(S) \le 1/2} \Phi(S)$.
It is often useful to view the conductance of a set as an escape
probability---starting from stationarity and conditioned on being in $S$,
the conductance $\Phi(S)$ is the probability that $\cM$ leaves $S$ in one step.

\begin{theorem}[\cite{levin2017markov}]
\label{thm:mixing-conductance-bound}
For an ergodic, reversible Markov chain with conductance $\Phi^*$, 
$\tau\parens{1/4} \ge 1/(4 \Phi^*)$.
\end{theorem}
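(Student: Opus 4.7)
The plan is to lower-bound the mixing time via a flow union-bound argument: small conductance implies the chain escapes the bottleneck set slowly, so it cannot approach the stationary distribution quickly. I would begin by fixing a set $S \subseteq \Omega$ that attains the minimum in the definition of $\Phi^*$, so $\Phi(S) = \Phi^*$ and $\pi(S) \le 1/2$, and then define $\pi_S(x) = \pi(x)/\pi(S)$ for $x \in S$ (the stationary distribution conditioned on $S$).

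The main technical step is the following flow inequality: starting from $\pi_S$, the probability of being in $S^c$ after $t$ steps is at most $t\,\Phi^*$. To prove it, I would first run the chain from the unconditional stationary distribution $\pi$, so that $X_i \sim \pi$ for every $i$. By stationarity,
\[
\Prob{X_i \in S,\; X_{i+1} \in S^c} = \sum_{x \in S,\, y \not\in S} \pi(x)\, P(x,y) = \pi(S)\,\Phi^*
\]
for every $i \ge 0$. Any trajectory with $X_0 \in S$ and $X_t \in S^c$ must cross the boundary at some step $0 \le i < t$, so a union bound over the $t$ candidate crossing times gives $\Prob{X_0 \in S,\, X_t \in S^c} \le t\,\pi(S)\,\Phi^*$. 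Dividing by $\pi(S)$ yields the claimed flow bound.

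By averaging over the starting state, there exists some $x \in S$ with $P^t(x, S^c) \le t\,\Phi^*$. Comparing to stationary and using $\pi(S^c) \ge 1/2$,
\[
\tv{P^t(x, \cdot) - \pi} \ge \pi(S^c) - P^t(x, S^c) \ge \tfrac{1}{2} - t\,\Phi^*.
\]
For every integer $t < 1/(4\Phi^*)$ the right-hand side strictly exceeds $1/4$, so the chain has not yet mixed from $x$, and hence $\tau(1/4) \ge 1/(4\Phi^*)$.

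The main delicate point is the flow inequality itself: one cannot simply iterate the one-step escape estimate, because the distribution conditional on remaining in $S$ drifts away from $\pi_S$ after even a single step. Stationarity sidesteps this difficulty by letting us bound the probability of a single-step boundary crossing at any time $i$ uniformly in $i$, after which a union bound over $0 \le i < t$ collapses multi-step escape to a sum of $t$ identical terms. Everything after that reduces to a standard triangle-inequality comparison between the bottlenecked chain and $\pi$.
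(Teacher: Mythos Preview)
The paper does not supply its own proof of this theorem; it simply quotes the bound from \cite{levin2017markov} and uses it as a black box. Your argument is correct and is essentially the standard textbook proof: bound the stationary probability of crossing from $S$ to $S^c$ at any fixed step, union-bound over the $t$ possible crossing times, average to find a bad starting state, and compare to $\pi(S^c)\ge 1/2$. One small point worth making explicit is that the witness state $x\in S$ you extract may depend on $t$, but this is harmless since the definition of $\tau(1/4)$ requires the total variation bound to hold for \emph{every} starting state, so a single bad $x$ at each $t<1/(4\Phi^*)$ suffices.
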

\noindent
To show that a Markov chain is slow mixing, it suffices to show
that the conductance is exponentially small.

\section{Slow Mixing in the Ferroelectric Phase}
\label{sec:ferroelectric}

We start with the ferroelectric phase where $a > b + c$ or $b > a + c$,
and we give a conductance-based argument to show that Glauber
dynamics can be slowly mixing in the entire ferroelectric region.
Specifically, we show that there exist boundary conditions that induce
an exponentially small, asymmetric bottleneck in the state space, revealing 
a natural trade-off between the energy and entropy in the system.
Viewing the six-vertex model in the intersecting lattice path interpretation
suggests how to plant polynomially-many paths in the grid
that can (1) be analyzed independently, while (2) being capable
of intersecting maximally.
This path independence makes our analysis tractable and allows us to
interpret the dynamics of a path as a \emph{correlated random walk},
for which we develop an exponentially small tail bound in~\Cref{sec:tail-behavior}.
Since conductance governs mixing times,
we show how to relate the expected maximum deviation of a correlated walk to the
conductance of the Markov chain and prove slow mixing.
In addition to showing slow mixing up to the conjectured threshold, a
surprising feature of our argument is that it potentially gives a combinatorial
explanation for the phase transition from the ferroelectric to disordered
phase. In particular, \Cref{lem:escape-cut-mass} demonstrates how the
parameters of the model delicately balance the probability mass of the Markov
chain.

We start by leveraging the invariance of the Gibbs measure
and the lattice path interpretation of the six-vertex model
to conveniently reparameterize the Boltzmann weights.
Recall that for a fixed boundary condition, the invariants of the model~\cite{bleher2009exact}
imply that $a = \sqrt{w_1 w_2}$.
Therefore, we set
$w_1 = \lambda^2$ and $w_2 = 1$ to ignore empty sites
while letting $a = \lambda$.
We also set $b = w_3 = w_4 = \mu$ and $c = w_5 = w_6 = 1$
so that the weight of a configuration only comes from straight segments and
intersections of neighboring lattice paths.

\subsection{Constructing the Boundary Conditions and Cut}
\label{subsec:escape-construction}

We begin with a few colloquial definitions for lattice paths that allow us to
easily construct the boundary conditions and
make arguments about the conductance of the Markov chain.
We call a $2n$-step, north-east lattice path $\gamma$ starting from $(0,0)$ a
\emph{path of length $2n$}, and if the path ends at $(n,n)$ we describe it as
\emph{tethered}.
If $\gamma = ((0,0), (x_1,y_1),(x_2,y_2),\dots,(x_{2n},y_{2n}))$,
we define the \emph{deviation} of $\gamma$ to be
$\max_{i=0..2n} \norm{(x_i,y_i) - (i/2,i/2)}_{1} $.
Geometrically, path deviation captures the (normalized) maximum perpendicular
distance of the path to the line $y=x$.
We refer to vertices $(x_i,y_i)$ along the path as \emph{corners} or
\emph{straights} depending on whether or not the path turned.
If two paths intersect at a vertex
we call this site a \emph{cross}.
Note that this classifies all vertex types in the six-vertex model.

We consider the following \emph{independent paths
boundary condition} for an $n\times n$ six-vertex model
for the rest of the section.
To construct this boundary condition, we consider its lattice path
interpretation.
First, place a tethered path $\gamma_0$ that enters $(0,0)$ horizontally
and exits $(n,n)$ horizontally.
Next, place $2\ell = 2\floor{n^{1/8}}$ translated tethered paths of varying
length above and below the main diagonal,
each separated from its neighbors by distance $d = \floor{32n^{3/4}}$.
Specifically, the paths $\gamma_1, \gamma_2, \dots, \gamma_{\ell}$ below the main diagonal
begin at the vertices $(d,0), (2d,0),\dots,(\ell d, 0)$
and end at the vertices $(n,n-d), (n,n-2d),\dots, (n, n-\ell d)$, respectively.
The paths $\gamma_{-1},\gamma_{-2},\dots,\gamma_{-\ell}$ above the main diagonal
begin at $(0,d), (0,2d), \dots, (0,\ell d)$
and end at $(n-d,n), (n-2d,n), \dots, (n-\ell d, n)$.
The deviation of a translated tethered path is the deviation
of the same path starting at $(0,0)$.
To complete the boundary condition, we force the
paths below the main diagonal to enter vertically and exit horizontally.
Symmetrically, we force the paths above the main diagonal to enter horizontally
and exit vertically.
See \Cref{fig:escape-entropy} for an illustration of the
construction when all paths have small deviation.

\begin{figure}
\centering
\begin{subfigure}{0.5\textwidth}
  \centering
  \includegraphics[width=0.70\linewidth]{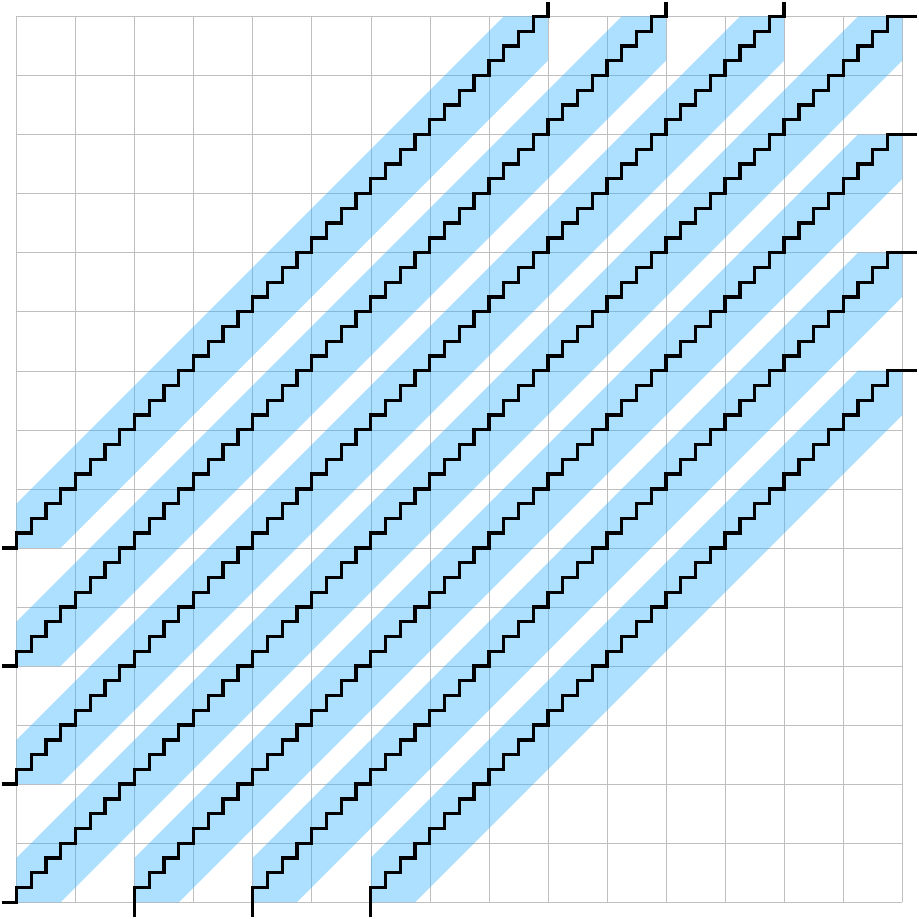}
  \caption{}
  \label{fig:escape-entropy}
\end{subfigure}%
\begin{subfigure}{0.5\textwidth}
  \centering
  \includegraphics[width=0.70\linewidth]{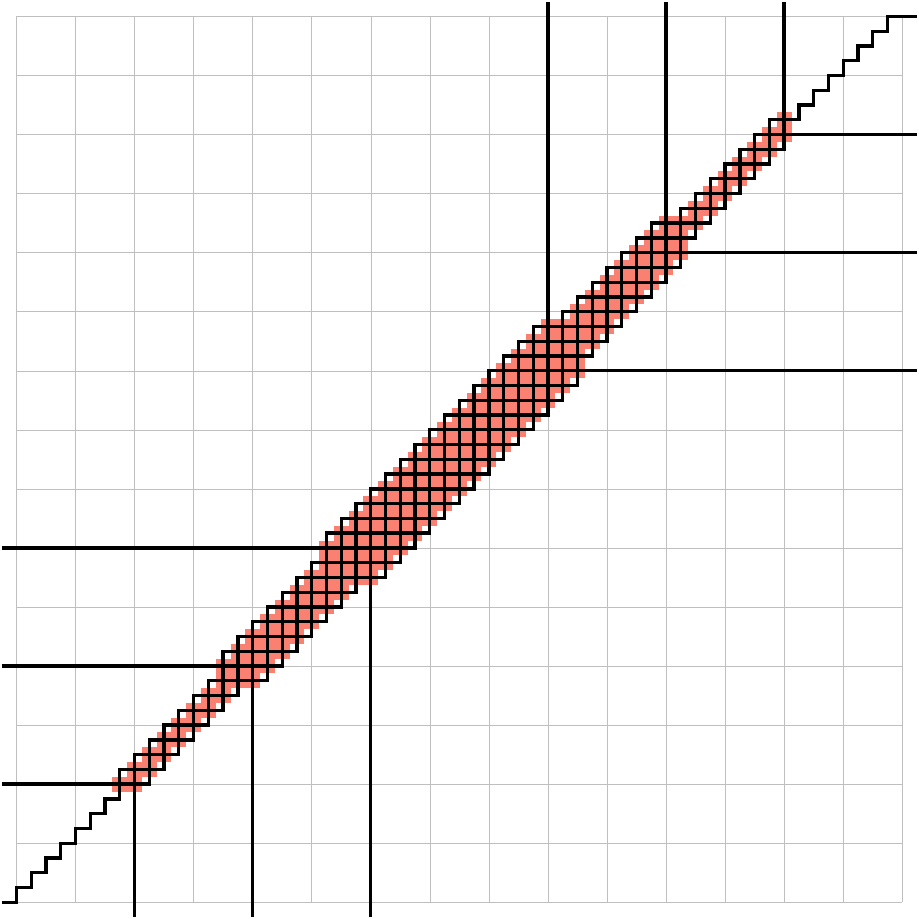}
  \caption{}
  \label{fig:escape-energy}
\end{subfigure}
  \caption{Examples of states with the independent paths boundary condition:
  (a) is a state in $S$ with the deviation bounds highlighted
  and (b) is the ground state in the ferroelectric phase.}
\label{fig:escape}
\end{figure}

Next, we construct an asymmetric cut in the state space induced by this boundary
condition in terms of its internal lattice paths.
In particular, we analyze a set $S$ of configurations such that
every path in a configuration has small deviation.  Formally, we let
\[
  S \DEF \set*{x \in \Omega : \text{the deviation of each path in $x$ is less than $8n^{3/4}$}}.
\]
Observe that by our choice of separation distance $d = \floor{32n^{3/4}}$
and the deviation limit for~$S$, no paths in any configuration of $S$
intersect.
It follows that the partition function for $S$ factors
into a product of $2\ell + 1$ partition functions,
one for each path with bounded deviation.
This intuition is useful when analyzing the conductance $\Phi(S)$ as an
escape probability from stationarity.

\subsection{Lattice Paths as Correlated Random Walks}
\label{sec:paths-and-walks}

Now we weight the internal paths according to the parameters
of the six-vertex model defined in the beginning of \Cref{sec:ferroelectric}.
The main result in this subsection is \Cref{lem:tethered-path-tail},
which states that random tethered
paths are exponentially unlikely to deviate past $\omega(n^{1/2})$,
even if drawn from a Boltzmann distribution that favors straights.
Start by defining $\Gamma(\mu, n)$ to be the distribution over 
tethered paths of length $2n$ with the property that
\[
  \Prob{\gamma} \propto \mu^{\text{($\#$ of straights in $\gamma$)}}.
\]

\begin{restatable}[]{lemma}{tetheredPathTail}
\label{lem:tethered-path-tail}
Let $\mu, \varepsilon > 0$ and $m = o(n)$.
For $n$ sufficiently large and $\gamma \sim \Gamma(\mu,n)$,
we have
\begin{align*}
  \Pr\parens*{\textnormal{$\gamma$ deviates by at least $2m$}}
  \le e^{-(1-\varepsilon)\frac{m^2}{\mu n}}.
\end{align*}
\end{restatable}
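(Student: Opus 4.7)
The plan is to interpret $\Gamma(\mu,n)$ as a correlated random walk with momentum $p=\mu/(1+\mu)$ conditioned on returning to the origin at time $2n$, and to derive the maximum-deviation bound by combining this identification with the unconditioned tail inequality developed in \Cref{sec:tail-behavior}. Encoding east as $-1$ and north as $+1$, a tethered lattice path of length $2n$ becomes a sign sequence $(X_1,\dots,X_{2n})$ with $\sum_iX_i=0$; a straight at step $i\ge 2$ is exactly the event $X_i=X_{i-1}$, so with $p=\mu/(1+\mu)$ the correlated walk assigns each sequence probability proportional to its $\Gamma$-weight, and tethering matches the conditioning $S_{2n}=0$. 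A direct computation shows that the deviation of $\gamma$ at step $i$ equals $|S_i|$, so the goal reduces to bounding $\Pr_\Gamma(\max_{i\le 2n}|S_i|\ge 2m)$.

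Next I would reduce the maximum to a single-time quantity. The cleanest route is via the first passage time $\tau=\min\{i:|S_i|\ge 2m\}$: conditional on $(\tau,X_\tau,S_\tau)$, the suffix is a fresh correlated walk, so reflecting this suffix (with careful accounting for the direction at $\tau$) pairs tethered walks that reach deviation $2m$ with walks that end at $\pm 4m$, giving
\begin{align*}
\Pr_\Gamma\!\left(\max_{i\le 2n}|S_i|\ge 2m\right)\;\lesssim\;\frac{\Pr(|S_{2n}|\ge 4m)}{\Pr(S_{2n}=0)}.
\end{align*}
A union-bound alternative that avoids reflection is to write $\Pr_\Gamma(S_i=2k)=Z_{2n}^{-1}\sum_{x\in\{\pm1\}}\bar Z_i(2k,x)\,\bar Z_{2n-i}(-2k,x)$, where $\bar Z_\ell(s,x)$ is the total $\mu$-weight of length-$\ell$ correlated trajectories ending at displacement $s$ with last step $x$, then summing over $i$ and $|k|\ge m$; the resulting polynomial overhead in $n$ gets absorbed into $\varepsilon$.

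I would then apply \Cref{lem:correlated_tail_bound} (with the exact marginals of \Cref{lem:exact_pos} giving $\Pr(S_{2n}=0)=\Theta(1/\sqrt{n})$) to obtain a Gaussian-type estimate $\Pr(|S_{2n}|\ge 4m)\le \poly(n)\cdot e^{-(4-o(1))m^2/(\mu n)}$, using that the asymptotic variance of $S_{2n}$ for the correlated walk is $2n\mu$ (obtained from the step-correlation $(\mu-1)/(\mu+1)$). Dividing yields a bound that is much stronger than the claimed $e^{-(1-\varepsilon)m^2/(\mu n)}$: the factor of $4$ in the exponent easily absorbs the $\poly(n)$ overhead whenever $m^2/(\mu n)=\omega(\log n)$, and in the remaining regime $m^2/(\mu n)=O(\log n)$ the right-hand side of the lemma is $\Theta(1)$ and the statement is trivially true.

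The main obstacle I anticipate is the reflection bookkeeping around the direction-dependence of the correlated walk: the process is Markovian only on the pair (position, last step), so a naive sign-flip of the suffix alters the transition probability across step $\tau$ by a factor of $p/(1-p)=\mu$. Handling this cleanly requires either summing over $X_\tau$ at first passage with the correct reweighting or falling back to the union-bound factorization sketched above. Once that detail is pinned down, the rest is routine manipulation of the Gaussian-type marginals provided by \Cref{lem:exact_pos,lem:correlated_tail_bound}, together with the observation that $m=o(n)$ keeps the $m^2/(\mu n)$ exponent dominant over all polynomial-in-$n$ corrections for $n$ sufficiently large.
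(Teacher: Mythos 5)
Your second (union-bound) route is essentially the paper's proof: identify $\Gamma(\mu,n)$ with the correlated walk of momentum $p=\mu/(1+\mu)$ conditioned on $S_{2n}=0$, relax the conditioning by dividing through $\Prob{S_{2n}=0}\sim 1/\sqrt{\mu\pi n}$ (\Cref{lem:return_prob}), union-bound the maximum deviation over intermediate times, and finish with \Cref{lem:correlated_tail_bound}. The main difference is cosmetic: rather than your convolution factorization of $\Pr_\Gamma(S_i=2k)$ into prefix/suffix partition functions (which, as written, drops the $\mu$-or-$1$ cross-term at the cut and has a sign issue from time-reversing a suffix that ends rather than starts at $-2k$), the paper bounds $\Prob{\max_i S_i\ge 2m}\le\sum_{i\le n}\Prob{S_{2i}\ge 2m}\le n^2\Prob{S_{2n}=2m}$ by invoking unimodality of the marginal (\Cref{lem:unimodal}) and monotonicity of the tail in the walk length, which keeps all the analysis at a single time $2n$. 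Your first route---reflecting the suffix at first passage to get the sharper $\Prob{|S_{2n}|\ge 4m}/\Prob{S_{2n}=0}$---is genuinely different, and you correctly flag the obstacle: the correlated walk is Markov only on (position, last step), so a naive sign-flip at $\tau$ perturbs the weight by a factor $p/(1-p)=\mu$. That factor is a harmless constant, so the reflection could likely be made rigorous and would yield a better exponent, but the paper sidesteps it entirely and settles for the weaker $e^{-(1-\varepsilon)m^2/(\mu n)}$, which is all that is needed downstream. In short, your proposal is correct in outline and its safe branch is the paper's argument; just be careful to state the cross-term in the factorization, or replace it with the unimodality step.
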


\noindent
Before giving the proof of \Cref{lem:tethered-path-tail},
we first introduce the concept of correlated random walks.
Then we present three prerequisite results about correlated random walks
and briefly explain their connection to the deviation of biased tethered paths.
Our goal here is to show how the supporting lemmas interact prior to the 
proof of \Cref{lem:tethered-path-tail}.

A key idea in our analysis of the ferroelectric phase is the notion of a \emph{correlated random
walk}, which generalize a simple symmetric random walk by accounting for
momentum.  A correlated random walk with
momentum parameter $p \in [0,1]$ starts at the origin and is defined as follows.
Let $X_1$ be a uniform random variable with support $\{-1,1\}$.
For all subsequent steps $i \ge 2$,
the direction of the process is correlated with the direction of the previous
step and satisfies
\begin{equation*}
  X_{i + 1} = \begin{cases}
    X_{i} & \text{with probability $p$,}\\
    -X_{i} & \text{with probability $1-p$}.
  \end{cases}
\end{equation*}
We denote the position of the walk at time $t$ by $S_{t} = \sum_{i=1}^t X_{i}$.
It will often be useful to make the change of variables $p=\mu/(1+\mu)$
when analyzing the six-vertex model,
where $\mu > 0$ is the weight of a straight vertex.
In many cases this also leads to cleaner expressions.
We use the following probability mass function (PMF) for the position of a
correlated random walk to develop our new tail inequality
(\Cref{lem:correlated_tail_bound}), which holds for all values of~$p$.
\begin{restatable}[\cite{hanneken1998exact}]{lemma}{exactPos}
\label{lem:exact_pos}
For any $n \ge 1$ and $m \ge 0$, the PMF of a
correlated random walk is
\begin{align*}
  \Prob{S_{2n}=2m}
  = \begin{cases}
    \frac{1}{2}p^{2n-1} & \text{if $2m=2n$,}\\
    \sum_{k=1}^{n-m} \binom{n+m-1}{k-1}\binom{n-m-1}{k-1} (1-p)^{2k-1}
       p^{2n-1-2k}
       \parens*{\frac{n(1-p)+k(2p-1)}{k}}
       & \text{if $2m < 2n$}.
  \end{cases}
\end{align*}
\end{restatable}

Now that we have defined correlated random walks,
we proceed by observing that there is a natural measure-preserving bijection between
biased tethered paths of length $2n$ and correlated random walks of length
$2n$ that return to the origin.
To see this, observe that every vertical edge in the tethered path corresponds
to a step to the right in the correlated random walk (i.e., $X_i = 1$),
and every horizontal edge in the tethered path corresponds to a step to the left in the correlated
random walk (i.e., $X_i = -1$).
Concretely, for a correlated random walk
$(S_0, S_1, \dots, S_{2n})$
parameterized by $p = \mu/(1+\mu)$, we have
\begin{equation}
\label{eqn:bijection}
  \Prob{\text{$\gamma$ deviates by at least $2m$}}
  =
  \ProbCond{\max_{i=0..2n} \abs*{S_i} \ge 2m}{S_{2n} = 0}.
\end{equation}

The first prerequisite lemma we present is an asymptotic equality
that generalizes the return probability of
simple symmetric random walks.
This allows us to relax the condition in \Cref{eqn:bijection} where the
correlated random walk must return to the origin,
and instead we bound
$\Prob{\max_{i=0..2n} |S_i| \ge 2m}$
at the expense of an polynomial factor.

\begin{lemma}[\cite{gillis1955correlated}]
\label{lem:return_prob}
For any constant $\mu > 0$, the return probability of a correlated random walk~is
\[
  \Prob{S_{2n}=0} \sim \frac{1}{\sqrt{\mu \pi n}}.
\]
\end{lemma}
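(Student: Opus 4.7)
The plan is to apply Fourier inversion to the characteristic function of $S_{2n}$, using the Markov structure of the step sequence $(X_i)$ on $\{-1,+1\}$ to reduce everything to the spectral analysis of a $2\times 2$ transfer matrix whose top eigenvalue governs the asymptotics. First I would introduce
\[
\bm{\phi}_n(t)=\parens*{\E\bracks*{e^{itS_n}\mathbf{1}\{X_n=+1\}},\ \E\bracks*{e^{itS_n}\mathbf{1}\{X_n=-1\}}}
\]
and derive the recursion $\bm{\phi}_{n+1}(t)=\bm{\phi}_n(t)\,M(t)$ with
\[
M(t)=\begin{pmatrix}p\,e^{it}&(1-p)\,e^{-it}\\(1-p)\,e^{it}&p\,e^{-it}\end{pmatrix}\qquad\text{and}\qquad \bm{\phi}_1(t)=\parens*{\tfrac12 e^{it},\tfrac12 e^{-it}}.
\]
Then $\phi_{2n}(t)\DEF\E\bracks*{e^{itS_{2n}}}$ equals the sum of the entries of $\bm{\phi}_1(t)M(t)^{2n-1}$, and Fourier inversion gives $\Prob{S_{2n}=0}=\tfrac{1}{2\pi}\int_{-\pi}^{\pi}\phi_{2n}(t)\,dt$.

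Next I would diagonalize $M(t)$. Its characteristic polynomial is $\lambda^2-2p(\cos t)\,\lambda+(2p-1)=0$, yielding $\lambda_\pm(t)=p\cos t\pm\sqrt{p^2\cos^2 t-(2p-1)}$. At $t=0$ one has $\lambda_+(0)=1$ and $\lambda_-(0)=2p-1\in(-1,1)$, and a Taylor expansion (using $\mu=p/(1-p)$) collapses to
\[
\lambda_+(t)=1-\tfrac{\mu}{2}t^2+O(t^4).
\]
A case analysis on the sign of $p^2\cos^2 t-(2p-1)$ shows that $\abs{\lambda_+(t)}<1$ strictly on $(0,\pi/2]$: where the discriminant is non-negative, $\lambda_+$ is real-analytic and strictly smaller than $1$ for $t\neq 0$; where it is negative, the eigenvalues are complex conjugates with $\abs{\lambda_+}=\sqrt{\abs{2p-1}}<1$ since $\lambda_+\lambda_-=2p-1$. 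Writing $\phi_{2n}(t)=\alpha_+(t)\lambda_+(t)^{2n-1}+\alpha_-(t)\lambda_-(t)^{2n-1}$ via the eigenvector decomposition, one verifies that $\alpha_\pm(t)$ are continuous with $\alpha_+(0)=1$, and the $\lambda_-^{2n-1}$ term contributes $O((2p-1)^{2n})$ and is exponentially negligible.

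Finally I would apply Laplace's method. Since $S_{2n}\in 2\Z$, the identity $\phi_{2n}(t+\pi)=\phi_{2n}(t)$ holds, so
\[
\Prob{S_{2n}=0}=\frac{1}{\pi}\int_{-\pi/2}^{\pi/2}\phi_{2n}(t)\,dt.
\]
On the window $\abs{t}\le n^{-1/2+\varepsilon}$, the substitution $u=t\sqrt{\mu n}$ turns the integrand into $e^{-u^2}(1+o(1))$, which contributes the Gaussian integral $\sqrt{\pi/(\mu n)}$; outside this window the exponential decay of $\abs{\lambda_+(t)}^{2n}$ makes the remaining contribution negligible. Combining yields $\Prob{S_{2n}=0}\sim \tfrac{1}{\pi}\sqrt{\pi/(\mu n)}=1/\sqrt{\mu\pi n}$, as claimed.

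The main obstacle is the uniform estimate $\abs{\lambda_+(t)}<1$ strictly away from any neighborhood of $t=0$, together with uniform control of the $O(t^4)$ Taylor error on the shrinking window; these analytic hypotheses are exactly what make the saddle-point estimate rigorous. It is equally important to keep track of the factor of $2$ arising from the $\pi$-periodicity of $\phi_{2n}$ (a consequence of $S_{2n}\in 2\Z$)---this is precisely what upgrades the naive local-CLT constant $1/\sqrt{2\pi\mu n}$ to the correct $1/\sqrt{\mu\pi n}$ stated in the lemma.
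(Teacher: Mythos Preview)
The paper does not prove this lemma at all: it is quoted verbatim from Gillis (1955) and used as a black box in the proof of \Cref{lem:tethered-path-tail}. So there is no ``paper's own proof'' to compare against.

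Your Fourier-analytic argument is correct and self-contained. The transfer-matrix computation is right (characteristic polynomial $\lambda^2-2p\cos t\,\lambda+(2p-1)$, top eigenvalue expansion $\lambda_+(t)=1-\tfrac{\mu}{2}t^2+O(t^4)$ with $\mu=p/(1-p)$), the $\pi$-periodicity reduction is exactly the parity correction that produces the constant $1/\sqrt{\mu\pi n}$ rather than $1/\sqrt{2\pi\mu n}$, and the Laplace window $|t|\le n^{-1/2+\varepsilon}$ with the uniform bound $|\lambda_+(t)|<1$ on the complement handles the tails. Two small points worth tightening if you write this out in full: (i) for $p>1/2$ the eigenvalues of $M(t)$ coalesce at the threshold $\cos^2 t=(2p-1)/p^2$, so the coefficients $\alpha_\pm(t)$ are not globally continuous --- but this is harmless since it occurs away from $t=0$ and you only need a crude spectral-radius bound $|\phi_{2n}(t)|\le C\rho^{2n}$ there; (ii) the claim $\alpha_+(0)=1$ follows because $\bm{\phi}_1(0)=(\tfrac12,\tfrac12)$ is already the left eigenvector of $M(0)$ for the eigenvalue $1$, which you might state explicitly. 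With those remarks, your proof stands on its own and supplies what the paper simply imports.
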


The second result that we need in order to
prove \Cref{lem:tethered-path-tail} is that the PMF
for correlated random walks is monotone.

\begin{restatable}[]{lemma}{Unimodal}
\label{lem:unimodal}
  For any momentum parameter $p \in (0,1)$ and $n$ sufficiently large,
the probability of the position of a correlated random walk is monotone.
Concretely, for $m \in \{0,1,\dots,n-1\}$,
we have
\[
  \Prob{S_{2n} = 2m} \ge \Prob{S_{2n} = 2(m+1)}.
\]
\end{restatable}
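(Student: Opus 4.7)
The plan is to prove unimodality by a termwise comparison of the two sums produced by \Cref{lem:exact_pos}. For $m \le n-2$, both $\Prob{S_{2n} = 2m}$ and $\Prob{S_{2n} = 2(m+1)}$ are given by the second (summation) case of the formula, with the sum for $m+1$ ranging over $k = 1, \ldots, n-m-1$---a strict subset of the range $k = 1, \ldots, n-m$ for $m$. For each $k$ in the shared range, the factors $(1-p)^{2k-1}p^{2n-1-2k}$ and $[n(1-p)+k(2p-1)]/k$ are identical in both sums, so the ratio of the $k$-th terms collapses to a ratio of binomial coefficients,
\[
\frac{\binom{n+m}{k-1}\binom{n-m-2}{k-1}}{\binom{n+m-1}{k-1}\binom{n-m-1}{k-1}} = \frac{(n+m)(n-m-k)}{(n+m-k+1)(n-m-1)}.
\]
A direct expansion yields $(n+m-k+1)(n-m-1) - (n+m)(n-m-k) = (k-1)(2m+1) \ge 0$ for every $k \ge 1$, so the $k$-th term of the sum for $m+1$ is bounded above by the $k$-th term of the sum for $m$.

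For this termwise bound to imply a bound on the full sums, each individual term must be non-negative. The binomial coefficients and powers of $p, 1-p$ are manifestly non-negative, so the only factor whose sign is in question is $n(1-p)+k(2p-1)$. When $p \ge 1/2$ both summands are non-negative; when $p < 1/2$ the requirement rearranges to $k \le n(1-p)/(1-2p)$, and since $(1-p)/(1-2p) \ge 1$ this bound exceeds the entire range $k \le n-m$. Summing the termwise inequality from $k = 1$ to $n-m-1$ and observing that the extra term at $k = n-m$ in the sum for $m$ is non-negative yields $\Prob{S_{2n} = 2m} \ge \Prob{S_{2n} = 2(m+1)}$ for all $m \le n-2$.

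The remaining case $m = n-1$ is the only place where the ``$n$ sufficiently large'' hypothesis is required, since it compares the summation formula against the closed-form formula. Here \Cref{lem:exact_pos} gives $\Prob{S_{2n} = 2(n-1)} = (1-p)p^{2n-3}[n(1-p)+2p-1]$ from the one-term sum and $\Prob{S_{2n} = 2n} = \frac{1}{2}p^{2n-1}$ from the closed-form case. Dividing by $p^{2n-3}$ reduces the desired inequality to $2(1-p)[n(1-p)+2p-1] \ge p^2$. The left-hand side grows linearly in $n$ with positive leading coefficient $2(1-p)^2$ for any fixed $p \in (0,1)$, while the right-hand side is independent of $n$, so the inequality holds for all $n$ beyond some threshold $n_0(p)$.

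The main obstacle is essentially bookkeeping: tracking the index ranges carefully and verifying the sign of each term so that the termwise comparison is legitimate. The heart of the argument---the algebraic identity reducing the binomial-ratio inequality to $(k-1)(2m+1) \ge 0$---is elementary, and the edge case $m = n-1$ is the only spot where an asymptotic-in-$n$ argument is needed.
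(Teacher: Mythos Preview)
Your proposal is correct and follows essentially the same approach as the paper: the same case split on $m=n-1$ versus $m\le n-2$, the same termwise comparison of the sums from \Cref{lem:exact_pos}, and the same reduction to the identity $(n+m-k+1)(n-m-1)-(n+m)(n-m-k)=(k-1)(2m+1)\ge 0$. Your explicit verification that each summand is non-negative (via the sign of $n(1-p)+k(2p-1)$) and that the unmatched $k=n-m$ term is non-negative actually fills a small gap that the paper's proof glosses over.
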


\begin{proof}
We consider the cases $m=n-1$ and $m \in \{0, 1, 2, \dots, n-2\}$ separately.
Using \Cref{lem:exact_pos}, the probability density function for the
position of a correlated random walk is
\begin{align*}
  \Prob{S_{2n}=2m}
  = \begin{cases}
    \frac{1}{2}p^{2n-1} & \text{if $2m=2n$,}\\
    \sum_{k=1}^{n-m} \binom{n+m-1}{k-1}\binom{n-m-1}{k-1} (1-p)^{2k-1}
       p^{2n-1-2k}
       \parens*{\frac{n(1-p)+k(2p-1)}{k}}
       & \text{if $2m < 2n$}.
  \end{cases}
\end{align*}
If $m = n-1$, then we have the equations
\begin{align*}
  &\Prob{S_{2n} = 2m} = (1-p) p^{2n-3} \parens*{n(1-p) + 2p-1},\\
  &\Prob{S_{2n} = 2(m+1)} = \frac{1}{2} p^{2n - 1}.
\end{align*}
Therefore, we have $\Prob{S_{2n} = 2m} \ge \Prob{S_{2n} = 2(m+1)}$
for all
\[
  n \ge \frac{1}{1-p} \cdot \parens*{\frac{p^2}{2(1-p)} + 1 - 2p} > 0.
\]

Now we assume that $m \in \{0,1,2,\dots, n-2\}$.
Writing $\Prob{S_{2n}=2m} - \Prob{S_{2n} = 2(m+1)}$ as a difference of
sums and matching the corresponding terms,
it is instead sufficient to show for all
values of $k \in \{1,2,\dots, n-(m+1)\}$, we have
\[
  \binom{n+m-1}{k-1} \binom{n-m-1}{k-1} - \binom{n+(m+1)-1}{k-1}\binom{n-(m+1)-1}{k-1} \ge 0.
\]
Next, rewrite the binomial coefficients as
\begin{align*}
  \binom{n+(m+1)-1}{k-1} &= \frac{n+m}{n + m - (k - 1)} \cdot \binom{n+m-1}{k-1},\\
  \binom{n-(m+1)-1}{k-1} &= \frac{n - m - k}{n - m - 1} \cdot \binom{n-m-1}{k-1}.
\end{align*}
Therefore, it remains to show that
\[
  1 - \frac{n+m}{n+m-(k-1)} \cdot \frac{n-m-k}{n-m-1} \ge 0.
\]
Since all of the values in $\{n+m, n+m-(k-1), n-m-k, n-m-1\}$ are positive for
any choice of $m$ and $k$, it is equivalent to show that
\begin{align*}
  (n+m-(k-1))(n-m-1) \ge (n+m)(n-m-k).
\end{align*}
Observing that
\[
  (n+m-(k-1))(n-m-1) - (n+m)(n-m-k) = (2m+1)(k-1) \ge 0
\]
completes the proof.
\end{proof}

The third result we need is an upper bound for the position of a correlated random walk.
We fully develop this inequality in \Cref{sec:tail-behavior}
by analyzing the asymptotic behavior of the PMF in \Cref{lem:exact_pos}.
We note that \Cref{lem:correlated_tail_bound} shows exactly
how the tail behavior of simple symmetric random walks generalizes
to correlated random walks as a function of $\mu$.

\begin{restatable}[]{lemma}{CorrelatedTailBound}
\label{lem:correlated_tail_bound}
Let $\mu, \varepsilon > 0$ and $m = o(n)$.
For $n$ sufficiently large, a correlated random walk satisfies
\begin{align*}
\Prob{S_{2n} = 2m}
\le e^{-(1-\varepsilon)\frac{m^2}{\mu n}}.
\end{align*}
\end{restatable}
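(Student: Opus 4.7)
The plan is to perform a saddle-point style analysis of the exact expression in Lemma~\ref{lem:exact_pos}. Substituting $p = \mu/(1+\mu)$, let $T_k$ denote the $k$-th summand of $\Prob{S_{2n}=2m}$, and write $k = \alpha n$, $m = \beta n$. Applying Stirling's approximation to both binomial coefficients gives $\log T_k = n \cdot f(\alpha, \beta) + O(\log n)$ uniformly in $k$, where
\[
f(\alpha, \beta) \DEF (1+\beta)\, H\!\left(\tfrac{\alpha}{1+\beta}\right) + (1-\beta)\, H\!\left(\tfrac{\alpha}{1-\beta}\right) + 2\alpha \log(1-p) + 2(1-\alpha) \log p,
\]
with $H$ the binary entropy. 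The rational prefactor $(n(1-p)+k(2p-1))/k$ contributes only an $O(\log n)$ error in the relevant range of $k$, and the edge term $\frac{1}{2}p^{2n-1}$ from the $m=n$ case of Lemma~\ref{lem:exact_pos} is subdominant when $m = o(n)$.

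Next, I would maximize $f(\cdot, \beta)$ over $\alpha$. Setting $\partial_\alpha f = 0$ reduces to the quadratic
\[
(1-\mu^2)\alpha^2 - 2\alpha + (1-\beta^2) = 0,
\]
whose relevant root $\alpha^*(\beta)$ satisfies $\alpha^*(0) = 1/(1+\mu) = 1-p$. A direct computation confirms $f(\alpha^*(0), 0) = 0$, consistent with probabilities being bounded by $1$. To extract the quadratic decay in $\beta$, the envelope theorem gives
\[
\frac{d^2}{d\beta^2}\,f\bigl(\alpha^*(\beta), \beta\bigr)\bigg|_{\beta=0} = \partial_{\beta\beta} f\bigl(\alpha^*(0), 0\bigr) = 2 - \frac{2}{1 - \alpha^*(0)} = -\frac{2}{\mu}.
\]
Hence $f(\alpha^*(\beta), \beta) = -\beta^2/\mu + O(\beta^3)$, and for $\beta = m/n = o(1)$ this yields $n \cdot f(\alpha^*(\beta), \beta) \le -(1-\varepsilon/2)\, m^2/(\mu n)$ once $n$ is sufficiently large.

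Combining these two steps, $T_k \le n^{O(1)} \exp\!\left(-(1-\varepsilon/2)\, m^2/(\mu n)\right)$ uniformly in $k$, so summing the at most $n$ terms of the exact PDF gives
\[
\Prob{S_{2n}=2m} \le n^{O(1)} \exp\!\left(-(1-\varepsilon/2)\tfrac{m^2}{\mu n}\right).
\]
When $m^2/(\mu n) \gtrsim \log n$, the polynomial factor is absorbed into an additional $\exp(-(\varepsilon/2)\, m^2/(\mu n))$, yielding the claimed bound. In the complementary regime where $m^2/(\mu n) = O(\log n)$, I would instead invoke unimodality (Lemma~\ref{lem:unimodal}) together with the return probability $\Prob{S_{2n}=0} = O(n^{-1/2})$ from Lemma~\ref{lem:return_prob} to obtain $\Prob{S_{2n}=2m} \le O(n^{-1/2})$, which dominates the target $e^{-(1-\varepsilon) m^2/(\mu n)}$ in this regime (by shrinking $\varepsilon$ slightly if needed).

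The main obstacle is the saddle-point optimization itself: identifying the correct root $\alpha^*(\beta) = 1-p + O(\beta^2)$ of the quadratic and carrying the Stirling approximation through with enough uniformity in $k$ so that the envelope computation cleanly recovers the variance $\mu n$ of the underlying correlated random walk. The remaining work---Taylor expanding $f$, absorbing polynomial factors, and matching the small-$m$ regime to unimodality---is straightforward bookkeeping once the asymptotic structure of $f$ at its saddle point is understood.
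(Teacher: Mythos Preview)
Your approach is essentially the paper's: apply Stirling to the exact PDF of Lemma~\ref{lem:exact_pos}, locate the dominant term via the quadratic $(1-\mu^2)\alpha^2 - 2\alpha + (1-\beta^2) = 0$ (this is exactly the paper's Lemma~\ref{lem:critical_point} after the substitution $\alpha = x/n$, $\beta = m/n$), and extract the leading exponent $-m^2/(\mu n)$. Your envelope-theorem computation $\partial_{\beta\beta}f(\alpha^*(0),0) = 2 - 2/(1-\alpha^*(0)) = -2/\mu$ is a cleaner repackaging of the paper's Lemma~\ref{lem:asymptotic_equality}, which instead writes the maximum log-marginal $h(n)$ out explicitly and applies L'H\^opital twice in the variable $y=m/n$; the two computations are equivalent since $f$ is even in $\beta$, so the cross term $\partial_{\alpha\beta}f\cdot(\alpha^*)'$ vanishes at $\beta=0$.

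One remark on your regime split: when $m^2/(\mu n)\asymp c\log n$ with $c$ moderate, neither branch closes---the $n^{O(1)}$ saddle-point prefactor is not absorbed by $e^{(\varepsilon/2)m^2/(\mu n)}$, and the unimodality bound $O(n^{-1/2})$ need not beat $e^{-(1-\varepsilon)c\log n}=n^{-(1-\varepsilon)c}$ once $(1-\varepsilon)c>1/2$. The paper's final displayed line has the same lacuna and simply does not address it. This is immaterial for the application in Lemma~\ref{lem:tethered-path-tail} (where $m=\Theta(n^{3/4})$, so $m^2/(\mu n)\asymp n^{1/2}$), and in full generality it is closed by observing that a genuine Laplace-method estimate on the sum yields a prefactor of order $n^{-1/2}$ rather than $n^{+O(1)}$.
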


Now that we have established these supporting lemmas, we are
prepared to complete the proof of \Cref{lem:tethered-path-tail}, which also
heavily relies on union bounds and relaxing conditional probabilities.

\begin{proof}[Proof of \Cref{lem:tethered-path-tail}]
Using the measure-preserving bijection between
tethered paths of length $2n$ and correlated random walks
of length $2n$
(\Cref{sec:paths-and-walks})
along with the definition of conditional probability and
\Cref{lem:return_prob}, we have
\begin{align*}
  \Pr\parens*{\text{$\gamma$ deviates by at least $2m$}}
    &= \ProbCond{\max_{i=0..2n} \abs*{S_{i}} \ge 2m}{S_{2n} = 0} \\
    &\le \frac{\Prob{\max_{i=0..2n} \abs*{S_{i}} \ge 2m}}{\Prob{S_{2n} = 0}} \\
    &\le 2\sqrt{\mu \pi n} \cdot \Pr\parens*{\max_{i=0..2n} \abs*{S_{i}} \ge 2m},
\end{align*}
where the last inequality uses the definition of asymptotic equality
with $\varepsilon=1/2$.
Next, a union bound and the symmetry of correlated random walks imply that
\begin{align*}
  \Prob{\max_{i=0..2n} \abs*{S_{i}} \ge 2m}
  &\le \Prob{\max_{i=0..2n} S_i \ge 2m}
  + \Prob{\min_{i=0..2n} S_i \le -2m}\\
  &= 2 \cdot \Prob{\max_{i=0..2n} S_i \ge 2m}.
\end{align*}

Now we focus on the probability that the maximum 
position of the walk
is at least~$2m$.
For this event to be true,
the walk must reach $2m$ at some time
$i \in \{0, 1, 2, \dots, 2n\}$,
so by a union bound,
\begin{align*}
  \Prob{\max_{i=0..2n} S_i \ge 2m} 
    &\le \sum_{i=0}^{2n} \Prob{S_{i} = 2m}
    \le \sum_{i=1}^{n} \Prob{S_{2i} \ge 2m}.
\end{align*}
The second inequality takes into account the parity of the random
walk, the fact that if $i=0$ the walk can only be at position $0$,
and the relaxed condition that the final position is at least~$2m$.
\Cref{lem:unimodal} implies that the distribution is
unimodal on its support centered at the origin for sufficiently large $n$.
Moreover, for
walks of the same parity with increasing length and a fixed tail threshold,
the probability of the tail is nondecreasing.
Combining these two observations, we have
\begin{align*}
  \sum_{i=1}^{n} \Prob{S_{2i} \ge 2m}
  &\le n \cdot \Prob{S_{2n} \ge 2m}
  \le n^2 \cdot \Prob{S_{2n} = 2m}.
\end{align*}
Using the chain of previous inequalities
and the upper bound for $\Prob{S_{2n} = 2m}$ in \Cref{lem:correlated_tail_bound}
with the smaller error $\varepsilon/2$,
it follows that
\begin{align*}
  \Pr\parens*{\text{$\gamma$ deviates by at least $2m$}}
    &\le 4n^2\sqrt{\mu \pi n} \cdot
    \Prob{S_{2n}=2m}
  \le 
    e^{-(1-\varepsilon)\frac{m^2}{\mu n}},
\end{align*}
which completes the proof.
\end{proof}

\subsection{Bounding the Conductance and Mixing Time}
\label{subsec:escape-conductance}

Next, we bound the conductance of the Markov chain by viewing
$\Phi(S)$ as an escape probability.
We start by claiming that $\pi(S) \le 1/2$ (as required by the definition of conductance)
if and only if the parameters are in the ferroelectric phase.
Then we use the correspondence between tethered paths and correlated random walks
(i.e., \Cref{sec:paths-and-walks})
to prove that $\Phi(S)$ is exponentially small.

\begin{restatable}[]{lemma}{escapeCutMass}
\label{lem:escape-cut-mass}
Let $\mu > 0$ and $\lambda > 1 + \mu$ be constants.
For $n$ sufficiently large, $\pi\parens*{S} \le 1/2$.
\end{restatable}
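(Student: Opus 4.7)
The plan is to prove $\pi(S) \le 1/2$ by exhibiting a single configuration $x^* \in \Omega \setminus S$ whose Boltzmann weight alone dominates $Z(S)$; since $\pi(S) \le Z(S)/\parens*{Z(S) + w(x^*)}$, it suffices to show $w(x^*) \ge Z(S)$.

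First I would observe that the separation $d = \floor{32 n^{3/4}}$ exceeds twice the deviation bound $8 n^{3/4}$, so the paths in every $x \in S$ are pairwise disjoint, $n_1(x) = 0$, and $Z(S)$ factors as a product of single-path partition functions restricted to deviation at most $D = 8 n^{3/4}$:
\begin{equation*}
Z(S) \;=\; \prod_{i=-\ell}^{\ell} Z_i^{\le D},
\qquad Z_i^{\le D} \DEF \sum_{\gamma_i :\, \text{dev}(\gamma_i)\le D} \mu^{s(\gamma_i)}.
\end{equation*}
Combining the correlated-random-walk bijection from \Cref{sec:paths-and-walks} with the return-probability estimate in \Cref{lem:return_prob} gives the asymptotic $Z_i^{\text{all}} = \Theta\parens*{(1+\mu)^{L_i}/\sqrt{L_i}}$ for the unrestricted single-path partition function, and the tail bound \Cref{lem:tethered-path-tail} then yields $Z_i^{\le D} = (1 - o(1))\,Z_i^{\text{all}}$. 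Multiplying over $i$ produces the upper bound
\begin{equation*}
Z(S) \;\le\; \poly(n) \cdot (1+\mu)^{\sum_i L_i}.
\end{equation*}

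For the lower bound I would take $x^*$ to be the ferroelectric ground state depicted in \Cref{fig:escape-energy}: pair up the $2\ell + 1$ paths (leaving at most one unpaired) and route each pair through a common corridor near the diagonal so that at every cell of the corridor one path in the pair takes an east-step while the other takes a north-step. Each such cell then becomes a type-$1$ (crossing) vertex and contributes a factor $\lambda^2$ to the weight. Because the distance each path must travel outside its corridor to connect to its prescribed endpoints is $o(L_i)$, we can achieve $n_1(x^*) \ge (1/2 - o(1))\sum_i L_i$, and since at least one path is forced to deviate by $\Omega(n^{3/4}) \gg D$, we have $x^* \in \Omega \setminus S$ and
\begin{equation*}
w(x^*) \;\ge\; \lambda^{2 n_1(x^*)} \;\ge\; \lambda^{(1 - o(1)) \sum_i L_i}.
\end{equation*}
Dividing gives $w(x^*)/Z(S) \ge \parens*{\lambda/(1+\mu)}^{(1 - o(1)) \sum_i L_i}/\poly(n)$, which tends to infinity because $\lambda > 1 + \mu$ and $\sum_i L_i = \Theta(n^{9/8})$.

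The hard part will be constructing $x^*$ so that $n_1(x^*)$ is genuinely close to its maximum $\tfrac{1}{2}\sum_i L_i$ while respecting the prescribed boundary endpoints and entry/exit orientations of each path; this requires a careful geometric choreography of the paired paths through their shared diagonal corridor. It is also exactly here that the ferroelectric threshold enters tightly: the per-crossing gain $\lambda^2$ must outpace the $(1+\mu)^2$-growth per two steps of the single-path partition function, so the critical line $\lambda = 1 + \mu$ is the precise point at which these two rates balance, mirroring the phase boundary between the ferroelectric and disordered phases.
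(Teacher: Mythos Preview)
Your overall strategy matches the paper's exactly: bound $Z(S)$ above using the independence of the paths, bound $Z$ below by the weight of a single ground-state configuration, and compare via $\lambda > 1+\mu$. Your upper bound on $Z(S)$ is correct (indeed more refined than the paper's cruder $(1+\mu)^{2n+1}$-per-path bound, which already suffices).

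The gap is in your construction of $x^*$. Two zigzagging lattice paths of length $L$ can share at most about $L/2$ vertices (they alternate between two adjacent diagonals, and a partner path occupies the adjacent pair of diagonals, so they meet only on the common diagonal). Hence a \emph{disjoint} pairing of the $2\ell+1$ paths into $\ell$ pairs yields at most
\[
n_1(x^*) \;\approx\; \sum_{\text{pairs}} \tfrac{L}{2} \;\approx\; \tfrac{1}{4}\sum_i L_i,
\]
not the $(1/2-o(1))\sum_i L_i$ you assert. Plugging $(1/4)\sum L_i$ into your comparison gives $w(x^*)/Z(S) \ge \bigl(\lambda^{1/2}/(1+\mu)\bigr)^{(1-o(1))\sum L_i}$, which diverges only when $\lambda > (1+\mu)^2$; you lose exactly half the ferroelectric phase.

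What the paper does---and what recovers your claimed $(1/2)\sum L_i$---is a \emph{dense} packing rather than disjoint pairs: route all $2\ell+1$ paths onto consecutive diagonals $y=x-k$ so that every path (except the two outermost) crosses \emph{both} of its neighbors. Then essentially every interior vertex along the bundle is a type-1 vertex, giving $n_1(x^*)\approx 2\ell n \approx (1/2)\sum_i L_i$ crossings and the sharp comparison $\bigl(\lambda/(1+\mu)\bigr)^{(1-o(1))\sum L_i}$. The ``careful geometric choreography'' you flag as the hard part is precisely this: each $\gamma_k$ runs straight for about $kd$ steps to reach its slot in the bundle, zigzags alongside $\gamma_{k-1}$ and $\gamma_{k+1}$, then runs straight again to its endpoint---yielding $\approx 2kd$ straights and $\approx n-2kd$ crossings attributable to the pair $(\gamma_k,\gamma_{k-1})$, summed over all $2\ell$ adjacent pairs.
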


\begin{proof}
We start by upper bounding $\pi(S)$ in terms of the partition function $Z$.
No paths in any state of $S$ deviate by more than $2n^{3/4}$ by the
definition of $S$.
Moreover, since adjacent paths are separated by distance $d = \floor{32n^{3/4}}$,
no two can intersect (\Cref{fig:escape-entropy}).
Therefore, it follows that the paths are independent of each other, which is convenient
because it allows us to implicitly factor the generating function for
configurations in $S$.

Next, observe that an upper bound for the generating function of any single path is
$\parens*{1+\mu}^{2n+1}$. 
This is true because all paths have length at most~$2n$, and we introduce an
additional $(1+\mu)^2$ factor to account for boundary conditions.
Since all the paths are independent and $\ell = \floor{n^{1/8}}$,
we have
\begin{align*}
  \pi\parens*{S} &\le \frac{\parens*{\parens*{1+\mu}^{2n+1}}^{2\ell + 1}}{Z}
    = \frac{\parens*{1+\mu}^{4n^{9/8}\parens*{1+o(1)}}\parens*{1+o(1)}}{Z}.
\end{align*}

Now we lower bound the partition function $Z$ of the entire model
by considering the weight of the ferroelectric
ground state (\Cref{fig:escape-energy}).
Recall that we labeled the $\ell$ paths below the main diagonal path
$\gamma_{1},\gamma_{2},\dots,\gamma_{\ell}$ such that
$\gamma_{\ell}$ is farthest from the main diagonal.
Let $c \le 10$ be a constant that accounts for subtle misalignments
between adjacent paths. It follows
that each path~$\gamma_k$ uniquely corresponds to
at least $n - (2kd + k + c)$ intersections.
Using the last path $\gamma_{\ell}$ as a lower bound for the number of
intersections that each path contributes and
accounting also for the paths above the main diagonal,
it follows that there are at least
\begin{align*}
\label{eqn:escape-intersections}
  2\ell\parens*{n - \parens*{2\ell d + \ell + c}}
  &= 2n^{9/8}\parens*{1 - o(1)}
\end{align*}
intersections in the ground state.

Similarly, we bound the number of straights that each path $\gamma_{k}$
contributes.
Note that we may also need an upper bound for this quantity in order to lower
bound the partition function since it is possible that $0 < \mu < 1$.
The number of straights in $\gamma_{k}$ is 
$2(kd \pm c)$, and $\gamma_0$ has two straights on the boundary.
Therefore, the total number of straights in the ground configuration is 
\begin{align*}
  2\sum_{k=1}^{\ell} 2\parens*{kd \pm c}
  &= 64n\parens*{1 + o(1)}.
\end{align*}

Since intersections are weighted by $\lambda^2$ and straights by $\mu$ in our
reparameterized model, by considering the ground state
and using the previous enumerations, it follows that
\begin{align*}
  Z \ge \parens*{\lambda^2}^{2n^{9/8}\parens*{1-o(1)}} 
    \mu^{64n\parens*{1+o(1)}}.
\end{align*}
Combining these inequalities allows us to upper bound the probability mass
of the cut $\pi(S)$ by
\begin{align*}
  \pi(S) &\le \parens*{\frac{1+\mu}{\lambda}}^{4n^{9/8}(1+o(1))}
      \mu^{-64n(1+o(1))} \parens*{1+o(1)}.
\end{align*}
Using the assumption that $\lambda > 1 + \mu$, we have
$\pi(S) \le 1/2$ for $n$ sufficiently large, as desired.
\end{proof}

Our analysis of the escape probability from $S$ critically relies
on the fact that paths in any state $x \in S$ are non-intersecting.
Combinatorially, we exploit the factorization of the generating
function for states in $S$ as a product of $2\ell + 1$ independent path
generating functions.

\begin{lemma}
\label{lem:escape-conductance}
Let $\mu, \varepsilon > 0$ be constants. For $n$ sufficiently large, 
$\Phi(S) \le e^{-(1-\varepsilon)\mu^{-1} n^{1/2}}$.
\end{lemma}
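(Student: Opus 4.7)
The plan is to bound $\Phi(S)$ directly as a one-step escape probability and reduce the entire estimate to the single-path tail inequality \Cref{lem:tethered-path-tail}. First I would observe that a single Glauber move flips at most one cell, which shifts any path's trajectory by at most a constant amount. Consequently, if some $x \in S$ admits a transition to $y \notin S$, then $x$ itself must already contain a path with deviation within $O(1)$ of the threshold $8 n^{3/4}$. Writing $\partial S$ for this ``near boundary'' set and using $\sum_{y \notin S} P(x,y) \le 1$, we obtain
\[
  \Phi(S) \;=\; \frac{1}{\pi(S)}\sum_{x \in S,\, y \notin S} \pi(x)\,P(x,y) \;\le\; \frac{\pi(\partial S)}{\pi(S)} \;=\; \pi_S\parens*{\partial S},
\]
so it suffices to upper bound the conditional probability, under the restricted Boltzmann measure $\pi_S$, that some path in the configuration has deviation nearly $8 n^{3/4}$.

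Next I would exploit the geometry of the independent paths boundary condition. Consecutive paths have starting points separated by $d = \floor{32 n^{3/4}}$, while in $S$ every path deviates by less than $8 n^{3/4}$, so no two paths in any $x \in S$ can ever meet. Hence the Boltzmann weight factorizes over the $2\ell+1$ paths, and $\pi_S$ is the product of the independent single-path laws $\Gamma(\mu, n_k)$, each conditioned on deviation strictly below $8 n^{3/4}$. Applying \Cref{lem:tethered-path-tail} already gives an unconditional deviation probability of $e^{-\Omega(n^{1/2})}$, so each conditioning normalizer exceeds $1/2$ for $n$ large and $\pi_S$ is pointwise within a constant factor of the corresponding unconditional product measure.

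A union bound over the $2\ell+1 = O(n^{1/8})$ paths, combined with \Cref{lem:tethered-path-tail} applied with $m = 4 n^{3/4} - O(1)$ and path length $2 n_k \le 2n$, then gives
\[
  \pi_S\parens*{\partial S} \;\le\; O\parens*{n^{1/8}} \cdot e^{-(1-\varepsilon')\frac{16\, n^{1/2}}{\mu}}.
\]
Since the exponent $16\, n^{1/2}/\mu$ dwarfs the $O(n^{1/8})$ prefactor and the $O(1)$ slack at the deviation threshold, a mild tuning of the error parameter yields the claimed bound $\Phi(S) \le e^{-(1-\varepsilon)\mu^{-1} n^{1/2}}$.

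The main obstacle is the reduction from the one-step exit event on the six-vertex state space to a statement about a single path's deviation. This rests on two ingredients: the geometric factorization of $\pi$ on $S$, which is ensured precisely because the separation $d$ was chosen to exceed twice the deviation threshold, and the fact that per-path conditioning on small deviation distorts the law by only a constant factor, thanks to the exponential tail already established in \Cref{lem:tethered-path-tail}. Once these are in place, the remaining calculation is a union bound with substantial exponential slack.
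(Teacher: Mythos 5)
Your proposal is correct and follows essentially the same strategy as the paper: interpret $\Phi(S)$ as a one-step escape probability bounded by the conditional mass of near-boundary configurations, factor the restricted measure over independent paths (guaranteed because the separation $d$ exceeds twice the deviation cap), union-bound over the $2\ell+1$ paths, and apply \Cref{lem:tethered-path-tail}, absorbing the conditioning normalizer into a constant. The only cosmetic difference is the near-boundary threshold: you use the natural $8n^{3/4} - O(1)$, whereas the paper generously relaxes to $4n^{3/4}$, which is a superset of the true near-boundary set and therefore still a valid (if looser) upper bound; both yield the stated exponential rate after tuning $\varepsilon$.
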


\begin{proof}
The conductance $\Phi(S)$ can be understood as the following escape
probability. Sample a state $x \in S$ from the
stationary distribution $\pi$ conditioned on $x \in S$, and run the
Markov chain from $x$ for one step to get a neighboring state $y$.
The definition of conductance
implies that $\Phi(S)$ is the probability that $y \not\in S$.
Using this interpretation, we can upper bound
$\Phi(S)$ by the probability mass of states that are near the boundary
of $S$ in the state space, since the process must escape in one step.
Therefore, it follows from the independent paths boundary condition and
the definition of $S$ that
\begin{align*}
  \Phi(S) &\le \ProbCond{\textnormal{there exists a path in $x$ deviating by at
  least $4n^{3/4}$}}{x \in S}.
\end{align*}

Next, we use a union bound over the $2\ell + 1$ different paths in a configuration
and consider the event that a particular path~$\gamma_k$ deviates by at least $4n^{3/4}$.
Because all of the paths in $S$ are independent,
we only need to consider the behavior of $\gamma_k$ in isolation.
This allows us to rephrase the conditional event.
Relaxing the conditional probability of each term in the sum gives
\begin{align*}
  \Phi(S) &\le \sum_{k=-\ell}^\ell
    \ProbCond{\text{$\gamma_k$ deviates by at least $4n^{3/4}$}}{x \in S}\\
  &= \sum_{k=-\ell}^\ell
    \ProbCond{\text{$\gamma_k$ deviates by at least $4n^{3/4}$}}{\text{$\gamma_k$ deviates by less than $8n^{3/4}$}}\\
&\le \sum_{k=-\ell}^\ell
    \frac{\Prob{\text{$\gamma_k$ deviates by at least $4n^{3/4}$}}}{
      1 - \Prob{\text{$\gamma_k$ deviates by at least $8n^{3/4}$}}}.
\end{align*}
\noindent
For large enough $n$, the length of every path $\gamma_k$ is in the range
$[n,2n]$ since we eventually have the inequality $n - \ell d \ge n/2$.
Therefore, we can apply \Cref{lem:tethered-path-tail}
with the error $\varepsilon/2$
to each term
and use the universal upper bound
\begin{align*}
 \frac{\Prob{\text{$\gamma_k$ deviates by at least $4n^{3/4}$}}}{
      1 - \Prob{\text{$\gamma_k$ deviates by at least $8n^{3/4}$}}} 
 &\le
  \frac{e^{-\left(1-\frac{\varepsilon}{2}\right)\frac{16 n^{3/2}}{\mu n}}}
  {1 - e^{-\left(1-\frac{\varepsilon}{2}\right)\frac{64 n^{3/2}}{\mu n}}}
 \le
  2 e^{-\parens*{1-\frac{\varepsilon}{2}} \frac{16n^{3/2}}{\mu n}}.
\end{align*}
It follows from the union bound and previous inequality
that the conductance $\Phi(S)$ is bounded by
\begin{align*}
  \Phi(S) &\le (2\ell + 1) \cdot 2 e^{-\parens*{1-\frac{\varepsilon}{2}} \frac{16n^{3/2}}{\mu n}}
  \le e^{-(1-\varepsilon)\mu^{-1} n^{1/2}},
\end{align*}
which completes the proof.
\end{proof}

Now that we have constructed a cut in the state space with exponentially small
conductance, we can obtain a bound on the mixing time when the probability mass
is properly distributed.

\begin{theorem}\label{thm:mainferro}
Let $\mu, \varepsilon > 0$ and $\lambda > 1+\mu$. For $n$ sufficiently large,
$\tau\parens*{1/4} \ge e^{(1-\varepsilon)\mu^{-1} n^{1/2}}$.
\end{theorem}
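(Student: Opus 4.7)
The proof is a short corollary combining the two lemmas just established with the standard conductance lower bound for mixing times (Theorem \ref{thm:mixing-conductance-bound}). The plan is to verify that $S$ is admissible in the definition of the global conductance $\Phi^*$, substitute the escape probability bound from Lemma \ref{lem:escape-conductance}, and then absorb the resulting polynomial prefactor into the exponential.

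First, I would invoke Lemma \ref{lem:escape-cut-mass}, whose hypotheses $\mu > 0$ and $\lambda > 1+\mu$ are exactly those of the theorem, to conclude that $\pi(S) \le 1/2$ for all sufficiently large $n$. Since $S$ is clearly nonempty (it contains, for example, the state in which every planted lattice path stays as close to the line $y=x$ as its tether allows, well within the $8n^{3/4}$ deviation budget), we have $0 < \pi(S) \le 1/2$, so $S$ is a valid candidate set in the minimization defining $\Phi^*$. Consequently $\Phi^* \le \Phi(S)$.

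Next, I would apply Lemma \ref{lem:escape-conductance} with the tighter tolerance $\varepsilon/2$ in place of $\varepsilon$, which gives $\Phi(S) \le e^{-(1-\varepsilon/2)\mu^{-1} n^{1/2}}$ for $n$ large. Combined with $\tau(1/4) \ge 1/(4\Phi^*)$ from Theorem \ref{thm:mixing-conductance-bound}, this yields
\[
\tau(1/4) \;\ge\; \frac{1}{4\Phi^*} \;\ge\; \frac{1}{4\,\Phi(S)} \;\ge\; \frac{1}{4}\, e^{(1-\varepsilon/2)\mu^{-1} n^{1/2}}.
\]
The constant factor $1/4$ is subexponential in $n^{1/2}$, so for all sufficiently large $n$ one has $1/4 \ge e^{-(\varepsilon/2)\mu^{-1} n^{1/2}}$. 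Multiplying this into the displayed bound absorbs the prefactor into the exponent and delivers the claimed estimate $\tau(1/4) \ge e^{(1-\varepsilon)\mu^{-1} n^{1/2}}$.

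There is no substantive obstacle at this stage; the argument is essentially bookkeeping of the $\varepsilon$ slack and verifying admissibility of the cut. All of the genuine work has been pushed into the preceding lemmas, namely the cut-mass calculation that places the bottleneck strictly in the ferroelectric regime $\lambda > 1+\mu$, and, most importantly, the correlated-random-walk tail estimate underlying Lemma \ref{lem:escape-conductance}, which is what controls the deviations of the independently planted tethered paths.
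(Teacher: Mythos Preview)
Your proposal is correct and follows essentially the same approach as the paper's proof: use Lemma~\ref{lem:escape-cut-mass} to ensure $\pi(S)\le 1/2$ so that $\Phi^*\le\Phi(S)$, apply Lemma~\ref{lem:escape-conductance} with the halved error $\varepsilon/2$, and invoke Theorem~\ref{thm:mixing-conductance-bound}. Your version is slightly more explicit about absorbing the factor of $1/4$ into the $\varepsilon$-slack, but the argument is identical in substance.
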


\begin{proof}
Since $\pi(S) \le 1/2$ by \Cref{lem:escape-cut-mass},
we have $\Phi^* \le \Phi(S)$.
The proof follows from \Cref{thm:mixing-conductance-bound} and
the conductance bound in \Cref{lem:escape-conductance}
with a smaller error $\varepsilon/2$.
\end{proof}

Last, we restate our main theorem and use \Cref{thm:mainferro} to show that
Glauber dynamics for the six-vertex model can be
slow mixing for all parameters in the ferroelectric phase.

\ferroelectricThm*

\begin{proof}
Without loss of generality, we
reparameterized the model so that $a = \lambda$, $b = \mu$, and $c = 1$.
Therefore, Glauber dynamics with the independent
paths boundary condition is slow mixing if $a > b + c$
by \Cref{thm:mainferro}.
Since the rotational invariance of the six-vertex model implies that
$a$ and $b$ are interchangeable parameters,
this mixing time result also holds in the case $b > a + c$.
\end{proof}

\section{Slow Mixing in the Antiferroelectric Phase}
\label{sec:antiferroelectric}

While Glauber dynamics can be slowly mixing in the ferroelectric phase,
we find it is true for substantially different reasons.
In the antiferroelectric phase, Boltzmann weights satisfy $a + b < c$,
so configurations tend to favor corner (i.e.,~type-$c$) vertices. 
The main insight behind our slow mixing proof is that when
$c$ is sufficiently large,
the six-vertex model can behave like the low-temperature hardcore model on $\Z^2$
where configurations predominantly agree with one of two ground states.
Liu recently formalized this argument in~\cite{liu} and showed that
Glauber dynamics for the six-vertex model with free boundary conditions
requires exponential time
when $\max(a,b) < \mu c$, where $\mu \le 2.639$ is the connective constant of
self-avoiding walks on the square lattice \cite{guttmann2001square}.
His proof uses a Peierls argument based on topological obstructions introduced
by Randall~\cite{ran-top} in the context of independent sets.
In this section, we extend Liu's result to the
region depicted in \Cref{fig:phase-diagram-us} by computing a closed-form
multivariate generating function that upper bounds the number of self-avoiding
walks and better accounts for disparities in their Boltzmann weights induced by
the parameters of the six-vertex model.

\subsection{Topological Obstruction Framework}
\label{subsec:setup}

We start with a recap of the definitions and framework laid out in~\cite{liu}.
There are two ground states in the antiferroelectric phase
such that every interior vertex is a corner:
$\xR$ (\Cref{fig:af-red}) and $\xG$ (\Cref{fig:af-green}).
These configurations are edge reversals of each other, so
for any state $x \in \Omega$ we can color its edges
\emph{red} if they are oriented as in $\xR$ or
\emph{green} if they are oriented as in $\xG$.
See \Cref{fig:af-mixed} for an example of how a configuration is colored.
It follows from case analysis of the six vertex types in \Cref{fig:six} that
the number of red edges incident to any internal vertex is even, and
if there are only two red edges then they must be rotationally adjacent to each other.
The same property holds for green edges by symmetry.
Note that the four edges bounding a cell of the lattice are monochromatic if and only if
they are oriented cyclically, and thus reversible by Glauber dynamics.
We say that a simple path from a horizontal edge on the left boundary of
$\Lambda_n$ to a horizontal edge on the right boundary
is a \emph{red horizontal bridge} if it contains only red edges.
We define green horizontal bridges and monochromatic vertical
bridges similarly.
A configuration has a \emph{red cross}
if it contains both a red horizontal bridge and a red vertical bridge.
Likewise, we can define a \emph{green cross}.
Let $\CR \subseteq \Omega$ be the set of all states with
a red cross, and let $\CG \subseteq \Omega$ be the set of all states with a
green cross.
It follows from \Cref{lem:liu-faultline} that $\CR \cap \CG = \emptyset$.

\begin{figure}
\centering
\begin{subfigure}{0.33\textwidth}
  \centering
  \includegraphics[width=0.75\linewidth]{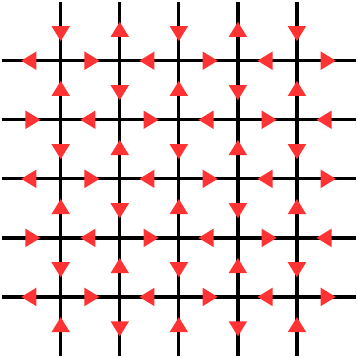}
  \caption{}
  \label{fig:af-red}
\end{subfigure}%
\begin{subfigure}{0.33\textwidth}
  \centering
  \includegraphics[width=0.75\linewidth]{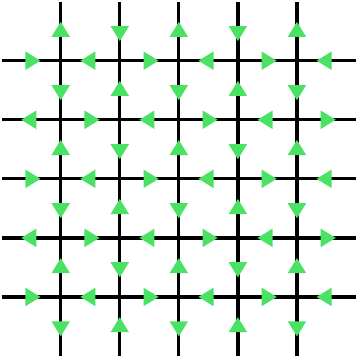}
  \caption{}
  \label{fig:af-green}
\end{subfigure}
\begin{subfigure}{0.33\textwidth}
  \centering
  \includegraphics[width=0.75\linewidth]{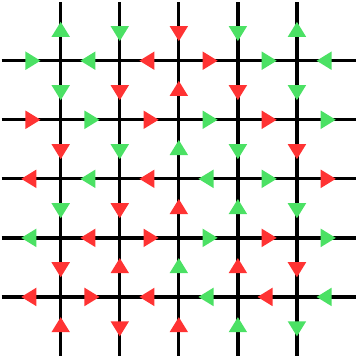}
  \caption{}
  \label{fig:af-mixed}
\end{subfigure}

\caption{Edge colorings of (a) the red ground state $\xR$,
  (b) the green ground state $\xG$, and
  (c) an example configuration with free boundary conditions that does not have a monochromatic cross.}
\label{fig:af-states}
\end{figure}

Next, we define the dual lattice $L_n$ to describe
configurations in $\Omega \setminus (\CR \cup \CG)$.
The vertices of~$L_n$ are the centers of the cells in $\Lambda_n$,
including the cells on the boundary that are partially enclosed,
and we connect dual vertices by an edge if their corresponding cells are
diagonally adjacent. Note that $L_n$ is a union
of two disjoint graphs (\Cref{fig:af-dual}).
For any state~$x \in \Omega$ there is a corresponding dual subgraph
$L_x$ defined as follows:
for each interior vertex $v$ in $\Lambda_n$, 
if $v$ is incident to two red edges and two green edges,
then~$L_x$ contains the dual edge
passing through $v$ that
separates the two red edges from the two green edges.
This construction is well-defined because the red edges are rotationally
adjacent.
See \Cref{fig:af-example-dual} for an example of a dual configuration.
For any $x \in \Omega$, we say that $x$ has a \emph{horizontal fault line}
if $L_x$ contains a simple path from a left dual boundary vertex to a right
dual boundary vertex.
We define horizontal fault lines similarly 
and let $\CFL \subseteq \Omega$ be the set of all states containing a
horizontal or vertical fault line.
Fault lines completely separate red and green edges, and hence
are topological obstructions that prohibit monochromatic bridges.

\begin{figure}
\centering
\begin{subfigure}{0.33\textwidth}
  \centering
  \includegraphics[width=0.75\linewidth]{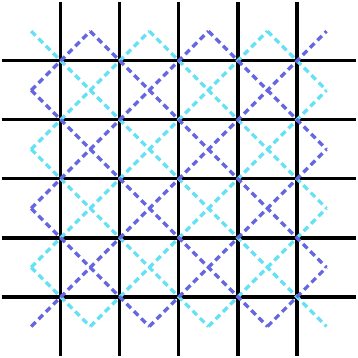}
  \caption{}
  \label{fig:af-dual}
\end{subfigure}%
\begin{subfigure}{0.33\textwidth}
  \centering
  \includegraphics[width=0.75\linewidth]{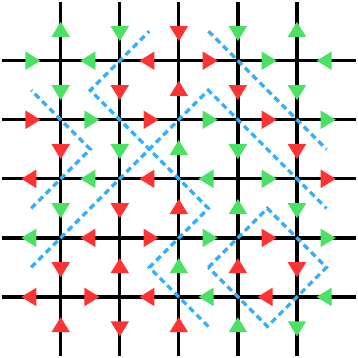}
  \caption{}
  \label{fig:af-example-dual}
\end{subfigure}
\begin{subfigure}{0.33\textwidth}
  \centering
  \includegraphics[width=0.75\linewidth]{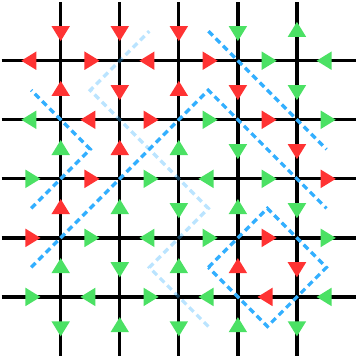}
  \caption{}
  \label{fig:af-map}
\end{subfigure}
\caption{
  Illustrations of (a) the dual lattice $L_n$ as a union of disjoint cyan and purple subgraphs,~(b) an example configuration overlaid with its dual graph,
  and (c) the example under the injective fault line map.
  }
\label{fig:af-injection}
\end{figure}

Last, we extend the notion of fault lines to \emph{almost fault lines}.
We say that $x \in \Omega$ has a horizontal almost fault line
if there is a simple path in $L_n$ connecting a left dual boundary vertex to a right
dual boundary vertex such that all edges except for one are in $L_x$.
We define vertical almost fault lines similarly
and let the set $\CAFL \subseteq \Omega$
denote all states containing an almost fault line.
Finally, let $\partial \CR \subseteq \Omega$ denote the set of states 
not in $\CR$ that one move away from $\CR$ in the state space
according to the Glauber dynamics.

\begin{lemma}[\cite{liu}]
\label{lem:liu-faultline}
We can partition the state space into
$\Omega = \CR \cup \CFL \cup \CG$.
Furthermore,
we have $\partial \CR \subseteq \CFL \cup \CAFL$.
\end{lemma}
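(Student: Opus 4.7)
I would decompose the lemma into the partition $\Omega = \CR \sqcup \CFL \sqcup \CG$ and the near-boundary containment $\partial \CR \subseteq \CFL \cup \CAFL$. Both rest on planarity of $\Lambda_n$ together with the standing observation that at any mixed internal vertex the two red (and likewise the two green) edges must be rotationally adjacent, never diametrically opposite.

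For the partition I would establish disjointness and coverage in turn. Disjointness of $\CR$ and $\CG$ follows from a discrete Jordan-curve / percolation-style argument: the coexistence of a red horizontal bridge and a green vertical bridge would force two monochromatic arcs to meet at a common interior vertex in an interleaved (RGRG) circular pattern, contradicting the rotational-adjacency constraint which only permits an RRGG pattern. Disjointness of $\CFL$ from $\CR \cup \CG$ follows directly from the definition of a fault line as a dual curve whose two sides are monochromatic of opposite colors, precluding any monochromatic bridge in the transverse direction. Coverage, i.e.\ that a state without either monochromatic cross must contain a fault line, is a planar-duality argument: if for instance no red horizontal bridge exists, then the red component attached to the left boundary does not reach the right boundary and its outer dual frontier, which lies in $L_x$ by rotational adjacency, yields a vertical fault line. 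A case split on which bridge is missing produces a fault line in the appropriate orientation.

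For $\partial \CR \subseteq \CFL \cup \CAFL$, I fix $x \in \partial \CR$ and a Glauber neighbor $y \in \CR$ obtained by flipping a single cell $f$. Since Glauber only flips cells whose four edges form a cycle, these four edges are monochromatic in $y$ and swap color in $x$. By the partition, $x \in \CFL \cup \CG$; if $x \in \CFL$ we are done, so I assume $x \in \CG$. The dual graphs $L_x$ and $L_y$ agree outside the four corners of $f$, and I would show by a local enumeration that this local difference amounts to a single dual edge near $f$ switching status between $L_x$ and $L_y$. Combining this with the fact that $L_y$ has no fault line (since $y \in \CR$) and that $L_x$ has no fault line either (since $x \in \CG$), I conclude that $L_x$ must contain a dual path from one boundary side of $\partial \Lambda_n$ to the opposite side missing exactly that one edge, which is precisely an almost fault line.

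The main obstacle throughout is the local-to-global bookkeeping inherent in the planar-duality arguments: for coverage one must confirm that the frontier dual curve really spans opposite boundary sides, and for the containment one must pinpoint the unique dual edge whose toggling mediates the transition between the $\CR$ and $\CG$ regimes. Both reduce to case analyses over the six vertex types at the corners of the relevant cell or monochromatic component, but the almost-fault-line step is the most delicate, since it must verify that a single Glauber flip at the primal cell $f$ corresponds uniformly, across all admissible local configurations, to a single-edge modification of the dual graph.
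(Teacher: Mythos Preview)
The paper does not prove this lemma; it is quoted from Liu and used as a black box. So there is no in-paper argument to compare against, and your plan must stand on its own.

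Your outline for the partition $\Omega = \CR \sqcup \CFL \sqcup \CG$ follows the standard percolation-style route (disjointness via a forced interleaved vertex at a proper crossing, coverage via the outer dual frontier of a monochromatic cluster that fails to span), and with the usual care on crossing-versus-touching it should go through.

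The genuine gap is in your treatment of $\partial \CR \subseteq \CFL \cup \CAFL$ in the sub-case $x \in \CG$. First, the local enumeration does not give a \emph{single} toggled dual edge. At each corner $c$ of $f$ the two edges not on $\partial f$ must be monochromatic (even-red parity), and in either case the dual edge through $c$ that toggles between $L_x$ and $L_y$ is precisely the one incident to the dual vertex $v_f$ at the center of $f$. Hence $L_x$ and $L_y$ are \emph{complementary} on the four dual edges at $v_f$ and agree elsewhere; up to four edges toggle, not one. Second, and more seriously, your concluding inference does not follow: you assume $L_y$ has no fault line (since $y \in \CR$) and $L_x$ has no fault line (since $x \in \CG$), and from this deduce that $L_x$ contains a boundary-to-boundary dual path missing one edge. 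But a local modification between two graphs \emph{neither} of which contains a spanning dual path gives no reason for either to contain an almost-spanning one; the one-edge-difference reasoning you are invoking only produces an almost fault line when exactly one of the two graphs has a fault line. To close this case you need an additional global input---for example, showing that $L_x \cup L_y$ (which contains all four edges at $v_f$) must carry a boundary-to-boundary path through $v_f$, using that $y$ carries a red cross while $x$ carries a green one; any such path uses two edges at $v_f$, and by complementarity exactly one lies in $L_y \setminus L_x$, yielding the almost fault line in $x$. Your plan does not supply this step.
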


\subsection{Bounding the Mixing Time with a Peierls Argument}
\label{subsec:extension}

In this subsection we show that $\pi(\CFL \cup \CAFL)$ is an exponentially small bottleneck
in the state space $\Omega$.
The analysis relies on \Cref{lem:liu-faultline} and
a new multivariate upper bound for weighted self-avoiding walks
(\Cref{lem:refined_upper_bound}).
Our key observation is that when a fault line changes direction,
the vertices in its path change from type-$a$ to
type-$b$ or vice versa.
Therefore, our goal in this subsection is to generalize the trivial
$3^{n-1}$ upper bound for the number of self-avoiding walks
by accounting for their changes in direction in aggregate.
We achieve this by using generating functions to
solve a system of linear recurrence relations.

We start by encoding \emph{non-backtracking walks} that start from the origin
and take their first step northward
using the characters in $\{\texttt{S},\texttt{L},\texttt{R}\}$,
representing straight, left, and right steps.
For example, the walk $\texttt{SLRSSL}$
corresponds to the sequence
$((0,0),(0,1),(-1,1),(-1,2),(-1,3),(-1,4),(-2,4))$.
If a fault line is the same shape as
$\texttt{SLRSSL}$ up to a rotation about the origin, then there
are only two possible sequences of vertex types through which it can pass:
$abaaab$ and $babbba$.
This follows from the fact that once the first vertex type is determined,
only turns in the self-avoiding walk
(i.e., the $\texttt{L}$ and $\texttt{R}$ characters)
cause the vertex type to switch.
We define the weight of a fault line to be the product
of the vertex types through which it passes.
More generally, we define the weight of a non-backtracking walk
that initially passes through a fixed vertex type
to be the product of the induced vertex types according to the rule that
turns toggle the current type.
Formally, we let
the function $g_a(\gamma) : \{\texttt{S}\}\times\{\texttt{S},\texttt{L},\texttt{R}\}^{n-1} \rightarrow \R_{\ge 0}$
denote the weight of a non-backtracking walk~$\gamma$
that starts by crossing a type-$a$ vertex.
We define the function $g_b(\gamma)$ similarly and provide the examples
$g_a(\texttt{SLRSSL}) = a^4b^2$ and $g_b(\texttt{SLRSSL}) = a^2b^4$ for clarity.
Last, observe that a sequence of vertex types can have many different walks
in its preimage.
The non-backtracking walk $\texttt{SRRSSR}$ also maps to
$abaaab$ and $babbba$---in fact, there are $2^3 = 8$ such walks in this example
since we can interchange $\texttt{L}$ and~$\texttt{R}$ characters.

The idea of enumerating the preimages of a binary string corresponding to
sequence of vertex types suggests a recursive approach for computing 
the sum of weighted non-backtracking walks.
This naturally leads to the use of generating functions, so we overload the
variables $x$ and $y$ to also denote function arguments.
For nonempty binary string $s \in \{0,1\}^n$, let~$h(s)$ count the number of
pairs of adjacent characters that are not equal and let $|s|$ denote the
number of ones in $s$ (e.g., if $s = 010001$ then $h(s) = 3$ and $|s| = 2$).
The sum of weighted self-avoiding walks is upper bounded by the sum of
weighted non-backtracking walks, so we proceed by analyzing the following
function:

\begin{equation}
\label{eqn:F_def}
  F_n(x,y) \DEF \sum_{\gamma \in \{\texttt{S}\}\times\{\texttt{S},\texttt{L},\texttt{R}\}^{n-1}} g_x(\gamma) + g_y(\gamma)
    = \sum_{s \in \{0,1\}^n} 2^{h(s)} x^{|s|} y^{n - |s|}.
\end{equation}

\noindent
Note that $F_n(1,1) = 2 \cdot 3^{n-1}$ recovers the number of non-backtracking
walks that initially cross type-$a$ or type-$b$ vertices.

In the next section, we compute the closed-form solution for $F_n(x,y)$
by diagonalizing a matrix corresponding to the system of recurrence
relations, which allows us to accurately quantify the discrepancy between fault
lines when the Boltzmann weights $a$ and~$b$ differ.
For now, we use the following upper bound for $F_n(x,y)$ in our Peierls
argument and defer its proof to \Cref{app:antiferroelectric}.

\begin{restatable}{lemma}{refinedUpperBound}
\label{lem:refined_upper_bound}
Let $F_n(x,y)$ be the generating function for weighted non-backtracking walks
defined in \Cref{eqn:F_def}.
For any integer $n \ge 1$ and $x,y \in \R_{> 0}$, we have
\[
  F_{n}(x,y) \le 3\parens*{x + y} \parens*{\frac{x + y + \sqrt{x^2 + 14xy + y^2}}{2}}^{n-1}.
\]
\end{restatable}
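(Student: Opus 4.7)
The plan is to reduce $F_n(x,y)$ to a second-order linear recurrence and prove the claimed upper bound by induction whose inductive step collapses cleanly through the characteristic equation. First I will condition on the last bit of $s$: let $A_n$ (resp.\ $B_n$) denote the partial sum over strings $s \in \{0,1\}^n$ ending in $0$ (resp.\ $1$). Appending a $0$ to a string of length $n-1$ contributes a factor $y$, with an extra factor of $2$ precisely when the predecessor bit was a $1$, and symmetrically for appending a $1$, giving
\begin{align*}
  A_n &= y\,A_{n-1} + 2y\,B_{n-1}, & A_1 &= y,\\
  B_n &= 2x\,A_{n-1} + x\,B_{n-1}, & B_1 &= x.
\end{align*}
The $2 \times 2$ transition matrix has trace $x+y$ and determinant $-3xy$, so the Cayley--Hamilton identity yields that $F_n(x,y) = A_n + B_n$ satisfies the scalar recurrence $F_n = (x+y)\,F_{n-1} + 3xy\,F_{n-2}$, whose characteristic polynomial $\lambda^2 - (x+y)\lambda - 3xy$ has largest root exactly
\[
  \lambda_+ = \frac{x+y + \sqrt{x^2+14xy+y^2}}{2},
\]
i.e.\ the base in the target inequality.

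Second, I will establish $F_n(x,y) \le 3(x+y)\,\lambda_+^{n-1}$ by strong induction on $n$. The inductive step is essentially mechanical once the recurrence is available: assuming the claim at $n-1$ and $n-2$, substitution gives
\begin{align*}
  F_n &\le (x+y)\cdot 3(x+y)\lambda_+^{n-2} + 3xy\cdot 3(x+y)\lambda_+^{n-3}\\
      &= 3(x+y)\,\lambda_+^{n-3}\bigl[(x+y)\lambda_+ + 3xy\bigr]
      = 3(x+y)\,\lambda_+^{n-1},
\end{align*}
where the last step uses the defining relation $(x+y)\lambda_+ + 3xy = \lambda_+^2$ for $\lambda_+$.

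The whole argument therefore reduces to verifying two base cases, which is where the constant $3$ must prove itself. The case $n=1$ is immediate from $F_1 = x+y \le 3(x+y)$. For $n=2$, direct expansion over the four strings of length $2$ gives $F_2 = (x+y)^2 + 2xy$; the inequality $\sqrt{x^2 + 14xy + y^2} \ge x+y$ (equivalent to $12xy \ge 0$) implies $\lambda_+ \ge x+y$, so $3(x+y)\lambda_+ \ge 3(x+y)^2 \ge (x+y)^2 + 2xy$, the final step reducing to the trivial $2(x^2+xy+y^2) \ge 0$. The one conceptual subtlety, and the reason I work with the recurrence instead of an explicit closed form, is that $\lambda_+ \lambda_- = -3xy < 0$ forces the smaller root to be negative, so the exact expression $F_n = c_+ \lambda_+^{n-1} + c_- \lambda_-^{n-1}$ oscillates in $n$ and chasing the coefficients $c_\pm$ is unpleasant. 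Induction lets me sidestep this by absorbing the oscillating $\lambda_-$ contribution into the universal constant $3$, so the main (and really only) obstacle is checking that $3$ is slack enough at $n=2$; every subsequent step is a consequence of the characteristic equation.
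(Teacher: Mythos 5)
Your proof is correct, and it is a genuinely different route from the paper's. Both start from the same two-state recurrence conditioned on the final bit, with transition matrix of trace $x+y$ and determinant $-3xy$. The paper then explicitly diagonalizes this matrix (finding $P$, $\Lambda$, $P^{-1}$), computes the exact closed form
\[
  F_{n}(x,y) =
  \frac{1}{2m} \parens*{
    \parens*{x^2 + 6xy + y^2 + m(x + y)} \lambda_{2}^{n-1}
    -
    \parens*{x^2 + 6xy + y^2 - m(x + y)} \lambda_{1}^{n-1}
  },
\]
with $m = \sqrt{x^2+14xy+y^2}$, and then bounds each piece using $\abs{\lambda_1}\le\lambda_2$ and $\frac{1}{m}(x^2+6xy+y^2) < 2(x+y)$ to land on the constant $3$. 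You instead invoke Cayley--Hamilton to pass immediately to the scalar recurrence $F_n = (x+y)F_{n-1} + 3xy F_{n-2}$ and run a strong induction, with the $n=2$ base case $F_2 = (x+y)^2 + 2xy \le 3(x+y)\lambda_+$ determining the constant. Your approach is more elementary and avoids the bookkeeping of eigenvectors and inverse matrices (which the paper needs anyway only as a means to the bound), but it does not yield the exact formula for $F_n$ as a standalone byproduct. One small cosmetic note: your inductive step silently uses $x,y>0$ so that the coefficients $(x+y)$ and $3xy$ are nonnegative and inequalities multiply through correctly, and that $F_{n-2}\ge 0$; both are true and worth a half-sentence. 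Otherwise the argument is complete and all the arithmetic checks out, including $(x+y)^2 + 12xy = x^2 + 14xy + y^2$ and $3(x+y)^2 - F_2 = 2(x^2+xy+y^2)\ge 0$.
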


The first step of our Peierls argument is to upper bound $\pi(\CFL \cup \CAFL)$,
which then gives us a bound on the conductance
and allows us to prove \Cref{thm:antiferroelectric}.
We start by defining the subset of antiferroelectric parameters that cause
$F_n(a/c,b/c)$ to decrease exponentially fast.

\begin{restatable}{lemma}{paramsLessThanOne}
\label{lem:params_less_than_one}
If $(a,b,c) \in \R_{> 0}^3$ is antiferroelectric and
$3ab + ac + bc < c^2$, then
\[
  a + b + \sqrt{a^2 + 14ab + b^2} < 2c.
\]
\end{restatable}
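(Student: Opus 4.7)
The plan is to show that the desired inequality is equivalent, after isolating and squaring the square root, to the hypothesis $3ab + ac + bc < c^2$. The work is almost entirely algebraic manipulation, so the main care is in verifying that the squaring step is justified, i.e.\ that both sides of the intermediate inequality are positive.

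First I would rearrange the goal as
\[
  \sqrt{a^2 + 14ab + b^2} < 2c - (a+b),
\]
and check that the right-hand side is positive. This is immediate from the antiferroelectric assumption, since $a + b < c < 2c$. The left-hand side is manifestly nonnegative, so both sides lie in $(0,\infty)$ and squaring preserves the inequality in both directions.

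Next I would square and simplify: the right-hand side becomes
\[
  (2c - (a+b))^2 = 4c^2 - 4c(a+b) + (a+b)^2 = 4c^2 - 4ac - 4bc + a^2 + 2ab + b^2,
\]
so the squared inequality is
\[
  a^2 + 14ab + b^2 < 4c^2 - 4ac - 4bc + a^2 + 2ab + b^2.
\]
Cancelling $a^2 + b^2$ and collecting terms gives $12ab + 4ac + 4bc < 4c^2$, which after dividing by $4$ is exactly $3ab + ac + bc < c^2$, the stated hypothesis.

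Since every step is reversible under the positivity conditions established at the outset, the hypothesis implies the desired conclusion. There is no real obstacle here; the only thing to be mindful of is explicitly recording that $2c - (a+b) > 0$ before squaring, so that the chain of equivalent inequalities is valid. The antiferroelectric assumption is only used for that positivity check, and not otherwise.
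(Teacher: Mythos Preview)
Your proof is correct and is, in fact, cleaner than the paper's. The paper normalizes by $c$ (setting $x=a/c$, $y=b/c$), rewrites the hypothesis as $y<(1-x)/(1+3x)$, and then substitutes this upper bound for $y$ into the expression $x+y+\sqrt{x^2+14xy+y^2}$, computing that the result is exactly $2$. That route implicitly relies on the monotonicity of the expression in $y$ (which is true but not stated) and involves recognizing $x^2(1+3x)^2+14x(1-x)(1+3x)+(1-x)^2$ as the perfect square $(3x^2-6x-1)^2$, plus checking the sign of $3x^2-6x-1$ on $0<x<1$. Your approach avoids all of this by going straight to the equivalence: isolating the radical, verifying $2c-(a+b)>0$ via the antiferroelectric assumption, and squaring to land exactly on the hypothesis. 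You use the antiferroelectric condition in precisely the same minimal way (to justify squaring), and your chain of equivalences makes the ``if and only if'' structure transparent, whereas the paper's substitution argument only yields one direction.
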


\begin{proof}
Let $x = a/c$ and $y = b/c$,
and observe that $0 < x < 1$ by the antiferroelectric assumption.
It follows from our hypothesis that $y < (1-x)/(1 + 3x)$.
Therefore, we have
\begin{align*}
  x + y + \sqrt{x^2 + 14xy + y^2} &<
    x + \frac{1-x}{1 + 3x} + \sqrt{\frac{x^2(1+3x)^2 + 14x(1-x)(1+3x) + (1-x)^2}{(1+3x)^2} }\\
  &= \frac{x(1+3x) + 1-x + \sqrt{\parens*{3x^2 - 6x - 1}^2}}{1+3x}\\
  &= \frac{x(1+3x) + 1-x - \parens*{3x^2 - 6x - 1}}{1+3x}\\
  &= \frac{2(1+3x)}{1+3x}\\
  &= 2,
\end{align*}
which completes the proof.
\end{proof}

\begin{lemma}
\label{lem:peierls}
If $(a,b,c) \in \R_{> 0}^3$ is antiferroelectric and
$3ab + ac + bc < c^2$, then for Glauber dynamics with free boundary conditions we have
\[
  \pi\left(\CFL \cup \CAFL\right) \le \poly(n) \parens*{\frac{a+b+\sqrt{a^2+14ab+b^2}}{2c}}^n.
\]
\end{lemma}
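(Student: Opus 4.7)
The plan is to carry out a Peierls argument in the style of Liu~\cite{liu}, but with the crude $3^{k-1}$ bound on the number of self-avoiding walks of length $k$ replaced by the weighted generating function $F_k$ from \Cref{lem:refined_upper_bound}. For each $x \in \CFL$ I would fix a canonical fault line $\gamma(x)$ (e.g., the topmost horizontal fault line, with a deterministic tiebreak; if $x$ has only vertical fault lines use the leftmost). Define $\phi_\gamma(x)$ by reversing the orientations of every edge strictly on one fixed side of $\gamma$. Since every vertex traversed by $\gamma$ has exactly two red and two rotationally adjacent green edges in $x$, the reflection turns each such vertex into a corner (type-$c$) vertex while leaving all other vertices unchanged, so $\phi_\gamma(x) \in \Omega$, and $\phi_\gamma$ is injective on the set of states with canonical fault line~$\gamma$ because we recover $x$ by reflecting back along the same side.

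The key weight identity reads
\[
  \frac{w(x)}{w(\phi_\gamma(x))} = \frac{W_\gamma(x)}{c^{|\gamma|}},
\]
where $W_\gamma(x)$ is the product of vertex weights along $\gamma$. The observation from \Cref{subsec:extension} is that the sequence of vertex types along $\gamma$ is determined by the shape of $\gamma$ together with its initial type $t \in \{a,b\}$: straight dual steps preserve the type, turns toggle it. Hence $W_\gamma(x) \in \{g_a(\gamma), g_b(\gamma)\}$, and summing over both initial types gives
\[
  \sum_{x : \gamma(x) = \gamma} w(x) \;\le\; \frac{g_a(\gamma) + g_b(\gamma)}{c^{|\gamma|}}\, Z.
\]
Summing over all possible shapes of $\gamma$ (a self-avoiding walk in the dual from boundary to boundary, hence of length $k \ge n$) and noting that self-avoiding walks are a subset of non-backtracking ones, the definition of $F_k$ in \Cref{eqn:F_def} yields
\[
  \pi(\CFL) \;\le\; O(n) \sum_{k \ge n} \frac{F_k(a,b)}{c^{k}} \;=\; O(n) \sum_{k \ge n} F_k(a/c,\, b/c),
\]
where the $O(n)$ counts starting boundary edges and horizontal-versus-vertical orientation. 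Applying \Cref{lem:refined_upper_bound} with $(x,y) = (a/c, b/c)$ gives $F_k(a/c,b/c) \le 3(a+b)/c \cdot r^{k-1}$ with $r = (a+b+\sqrt{a^2+14ab+b^2})/(2c)$, and \Cref{lem:params_less_than_one} guarantees $r < 1$ under our hypothesis. The geometric series then sums to $\poly(n)\, r^n$.

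The treatment of $\CAFL$ is parallel: an almost fault line is a simple dual path from boundary to boundary with exactly one missing edge, so the analogous reflection produces type-$c$ vertices at every site along the path except the one adjacent to the missing dual edge, where the vertex type contributes at most an $O(1)$ multiplicative loss. The choice of which dual edge is missing gives an extra factor of $O(k)$, absorbed into the $\poly(n)$ prefactor. Combining both bounds yields the claim. The main technical obstacle is the bookkeeping of the reflection map: verifying that the two sides of the canonical fault line are genuinely topologically separated so that edge-reversal on one side is compatible with the six-vertex ice rule, that the resulting vertex types along $\gamma$ are exactly type-$c$, and that the canonical-choice function together with the fixed side convention makes $\phi_\gamma$ injective. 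Once these details are in place, the quantitative improvement over Liu~\cite{liu} is driven entirely by the fact that $F_k$ tracks the correlation between turns and the $a$-versus-$b$ disparity, rather than bounding $g_a(\gamma) + g_b(\gamma)$ by $2 \max(a,b)^k$.
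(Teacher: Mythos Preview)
Your proposal is correct and follows essentially the same Peierls argument as the paper: the injective edge-reversal map on one side of a fault line, the observation that the induced vertex-type sequence along $\gamma$ is determined by its shape and a single starting type (so the total along-path weight is $g_a(\gamma)$ or $g_b(\gamma)$), the passage from self-avoiding to non-backtracking walks to invoke $F_k$, and then \Cref{lem:refined_upper_bound} together with \Cref{lem:params_less_than_one} to sum the geometric series. Two small differences worth noting: the paper dispenses with the canonical-fault-line bookkeeping by simply union-bounding over all triples $(\gamma,s,t)$ and using injectivity of the reversal map for each fixed triple, which is slightly cleaner; and for $\CAFL$ the paper makes the structure of your ``$O(1)$ loss plus $O(k)$ choices'' explicit by decomposing $\gamma=\gamma_1+\gamma_2$ at the type-$c$ break vertex and bounding the resulting sum by the convolution $\sum_{k=0}^{\ell} F_k(a/c,b/c)\,F_{\ell-k}(a/c,b/c)$, since the starting types on $\gamma_1$ and $\gamma_2$ are independent. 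Your absorbed factors recover the same $\poly(n)\,r^n$ bound, so the arguments are equivalent.
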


\begin{proof}
For any self-avoiding walk $\gamma$ and dual vertices $s, t \in L_n$ on the boundary,
let $\Omega_{\gamma,s,t} \subseteq \Omega$ be the set of
states that contain $\gamma$ as a fault line or an almost fault line
such that $\gamma$ starts at $s$ and ends at~$t$.
Without loss of generality, assume that the (almost) fault line is vertical.
Reversing the direction of all edges on the left side of $\gamma$ defines the
injective map
$f_{\gamma,s,t} : \Omega_{\gamma,s,t} \rightarrow \Omega \setminus \Omega_{\gamma,s,t}$
such that if $\gamma$ is a fault line of
$x \in \Omega_{\gamma,s,t}$, then the weight of its image
$f_{\gamma,s,t}(x)$ is amplified by $c^{|\gamma|}/g_a(\gamma)$
or~$c^{|\gamma|}/g_b(\gamma)$.
For an example of this injection, see \Cref{fig:af-map}.
Similarly, if~$\gamma$ is an almost fault line, decompose $\gamma$ into subpaths
$\gamma_1$ and $\gamma_2$ separated by a type-$c$ vertex
such that~$\gamma_1$ starts at $s$ and $\gamma_2$ ends at $t$.
In this case, the weight of the images of almost fault lines
is amplified by a factor of
$\min(a,b)/c \cdot c^{|\gamma_1| + |\gamma_2|}/(g_\alpha(\gamma_1)g_\beta(\gamma_2))$
for some $(\alpha,\beta) \in \{a,b\}^2$.
Using the fact that $f_{\gamma,s,t}$ is injective and summing over
the states containing $\gamma$ as a fault line and an almost fault line
separately gives us
\begin{align}
\label{eqn:saw_with_terminals_ineqality}
  \pi\parens*{\Omega_{\gamma,s,t}} \le
    \frac{g_a(\gamma) + g_b(\gamma)}{c^{|\gamma|}}
    +
    \frac{c}{\min(a,b)}\sum_{\gamma_1 + \gamma_2 = \gamma}
    \frac{g_a(\gamma_1)+g_b(\gamma_1)}{c^{|\gamma_1|}} \cdot
      \frac{g_a(\gamma_2)+g_b(\gamma_2)}{c^{|\gamma_2|}},
\end{align}
where the sum
is over all $\Theta(|\gamma|)$ decompositions of $\gamma$ into
$\gamma_1$ and $\gamma_2$.

Equipped with \Cref{eqn:saw_with_terminals_ineqality} and \Cref{lem:refined_upper_bound},
we use a union bound over all pairs of terminal vertices $(s,t)$
and fault line lengths $\ell$ to bound $\pi(\CFL \cup \CAFL)$
in terms of the generating function for weighted non-backtracking walks $F_\ell(x,y)$.
Since antiferroelectric weights satisfy $3ab+ac+bc<c^2$,
it follows from \Cref{lem:params_less_than_one} that
\begin{align*}
  \pi\parens*{\CFL \cup \CAFL} &\le \sum_{(s,t)} \sum_{\ell = n}^{n^2}
      \parens*{F_\ell\parens*{a/c,b/c} + \frac{c}{\min(a,b)}\sum_{k=0}^\ell F_{k}\parens*{a/c,b/c} F_{\ell-k}\parens*{a/c,b/c}}\\
    &\le \sum_{(s,t)} \sum_{\ell = n}^{n^2}
      \text{poly}(\ell) \parens*{\frac{a+b+\sqrt{a^2+14ab+b^2}}{2c}}^{\ell}\\
    &\le \text{poly}(n) \parens*{\frac{a+b+\sqrt{a^2+14ab+b^2}}{2c}}^{n}.
\end{align*}
Note that the convolutions in the first inequality generate all \emph{almost}
weighted non-backtracking walks.
\end{proof}

\antiferroelectricThm*

\begin{proof}[Proof of \Cref{thm:antiferroelectric}]
Let $\OMEGAMIDDLE = \CFL \cup \CAFL$, $\OMEGALEFT = \CR \setminus \OMEGAMIDDLE$,
and $\OMEGARIGHT = \CG \setminus \OMEGAMIDDLE$.
It follows from \Cref{lem:liu-faultline} that 
$\Omega = \OMEGALEFT \cup \OMEGAMIDDLE \cup \OMEGARIGHT$
is a partition with the properties that
$\partial \OMEGALEFT \subseteq \OMEGAMIDDLE$ and
$\pi(\OMEGALEFT) = \pi(\OMEGARIGHT)$.
Since the partition is symmetric, \Cref{lem:peierls} implies that
$1/4 \le \pi(\OMEGALEFT) \le 1/2$, for $n$ sufficiently large.
Therefore, we can upper bound the conductance by
$\Phi^* \le \Phi\parens*{\OMEGALEFT} \le 4\pi\parens*{\OMEGAMIDDLE}$.
Using \Cref{thm:mixing-conductance-bound} along with \Cref{lem:peierls} and
\Cref{lem:params_less_than_one} gives the desired mixing time bound.
\end{proof}

\subsection{Weighted Non-Backtracking Walks}
\label{app:antiferroelectric}

In this section we present a closed-form formula for
the weighted non-backtracking walks generating function $F_{n}(x,y)$,
and we give the proof of \Cref{lem:refined_upper_bound}.
We start by decomposing the generating function $F_n(x,y)$ into two sums
over disjoint sets of bit strings defined by their final character.
Formally, for any $n \ge 1$, let
\begin{align*}
  F_{n,0}(x,y) &= \sum_{s \in \{0,1\}^{n-1} \times \{0\}} 2^{h(s)} x^{|s|} y^{n - |s|} \\
  F_{n,1}(x,y) &= \sum_{s \in \{0,1\}^{n-1} \times \{1\}} 2^{h(s)} x^{|s|} y^{n - |s|}.
\end{align*}
First, note that $F_n(x,y) = F_{n,0}(x,y) + F_{n,1}(x,y)$.
Second, observe that by recording the final character of the bit strings, we can
design a system of linear recurrences to account for the $2^{h(s)}$ term
appearing in \Cref{eqn:F_def}, which counts the number of non-backtracking
walks that map to a given sequence of vertex types.

\begin{lemma}\label{lem:refined_recurrence}
For any integer $n \ge 1$ and $x,y \in \R_{> 0}$,
we have the system of recurrence relations
\begin{align*}
  F_{n+1,0}(x,y) &= xF_{n,0}(x,y) + 2x F_{n,1}(x,y)\\
  F_{n+1,1}(x,y) &= 2yF_{n,0}(x,y) + y F_{n,1}(x,y),
\end{align*}
where the base cases are $F_{1,0}(x,y) = x$ and $F_{1,1}(x,y) = y$.
\end{lemma}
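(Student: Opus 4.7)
The plan is to prove both recurrences simultaneously by conditioning each length-$(n+1)$ bit string on its last two characters. Write any $s \in \{0,1\}^{n+1}$ as $s = tc$, where $t \in \{0,1\}^{n}$ is the prefix and $c \in \{0,1\}$ is the appended bit, which is fixed by whether we are computing $F_{n+1,0}$ or $F_{n+1,1}$. Two elementary bookkeeping facts drive the argument. First, appending $c$ introduces exactly one new factor in the monomial $x^{|\cdot|} y^{(\cdot) - |\cdot|}$ — either $x$ or $y$ according to the bit-to-weight correspondence inherited from the definition of $F_{n}$. Second, since $h$ only counts transitions between unequal adjacent characters, we have $h(s) = h(t) + \mathbf{1}[c \ne t_{n}]$, so the coefficient $2^{h(s)}$ acquires one additional factor of $2$ precisely when the appended bit disagrees with the final character of $t$.

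With these two rules in hand, splitting the defining sum for $F_{n+1,c}$ according to whether $t$ ends in $0$ or in $1$ yields exactly two contributions. The ``same-bit'' case contributes a rescaling of $F_{n,c}$ by the single new factor from the appended bit, and the ``different-bit'' case contributes a rescaling of $F_{n, 1-c}$ by twice that factor. Fixing $c = 0$ and gathering the coefficients yields the first recurrence, and fixing $c = 1$ yields the second. The base cases are obtained by directly evaluating the defining sums on the length-one strings $s = 0$ and $s = 1$, each of which contributes a single monomial because $h = 0$.

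I do not anticipate any real obstacle here; the argument is a clean two-by-two case analysis on the final two bits of $s$, and the stated recurrences fall out of tracking how $h$ and $|s|$ transform under appending a single bit. The only mild care needed is to keep the convention for which character contributes a factor of $x$ versus $y$ consistent with the base cases $F_{1,0}(x,y) = x$ and $F_{1,1}(x,y) = y$, which pins down the orientation of the recurrences.
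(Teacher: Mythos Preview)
Your proposal is correct and matches the paper's approach: the paper's proof of this lemma is the one-line assertion that it ``immediately follows from the definitions of the functions $F_{n,0}(x,y)$ and $F_{n,1}(x,y)$,'' and your argument spells out precisely that derivation by conditioning on the last two bits. Your explicit remark about fixing the bit-to-weight convention via the stated base cases is apt, since that is what pins down which of $x$ or $y$ multiplies each of $F_{n,0}$ and $F_{n,1}$ in the recurrence.
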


\begin{proof}
This immediately follows from the definitions of
the functions $F_{n,0}(x,y)$ and $F_{n,1}(x,y)$.
\end{proof}

\begin{lemma}
\label{lem:refined_closed_form}
For any integer $n \ge 1$ and $x,y \in \R_{> 0}$, define the values
\begin{align*}
  m &= \sqrt{x^2 + 14xy + y^2}\\
  \lambda_1 &= \frac{1}{2}\parens*{x + y - m}\\
  \lambda_2 &= \frac{1}{2}\parens*{x + y + m}.
\end{align*}
The generating $F_n(x,y)$ can be written in closed-form as
\begin{align*}
  F_{n}(x,y) =
  \frac{1}{2m} \parens*{
    \parens*{x^2 + 6xy + y^2 + m(x + y)} \lambda_{2}^{n-1}
    -
    \parens*{x^2 + 6xy + y^2 - m(x + y)} \lambda_{1}^{n-1}
  }.
\end{align*}
\end{lemma}

\begin{proof}
For brevity, we let $F_{n,0} = F_{n,0}(x,y)$ and $F_{n,1} = F_{n,1}(x,y)$.
It follows from \Cref{lem:refined_recurrence} that
\begin{align*}
  \begin{bmatrix}
    F_{n+1,0} \\
    F_{n+1,1}
  \end{bmatrix}
  =
  \begin{bmatrix}
    x & 2x \\
    2y & y
  \end{bmatrix}
  \begin{bmatrix}
    F_{n,0} \\
    F_{n,1}
  \end{bmatrix}.
\end{align*}
Next, observe that the recurrence matrix is diagonalizable. In particular,
we have
\begin{align*}
  A = \begin{bmatrix} x & 2x \\ 2y & y \end{bmatrix}
    = P \Lambda P^{-1},
\end{align*}
where
\begin{align*}
  P = \frac{1}{4y}\begin{bmatrix}
    x - y - m & x - y + m\\
    4y & 4y
  \end{bmatrix} \hspace{0.75cm}
  \Lambda &= \begin{bmatrix}
   \lambda_1 & 0 \\
   0 & \lambda_2
 \end{bmatrix} \hspace{0.75cm}
  P^{-1} = \frac{1}{2m} \begin{bmatrix}
    -4y & x-y+m \\
    4y & -(x - y - m)
  \end{bmatrix}.
\end{align*}
Since the base cases are $F_{n,0} = x$ and $F_{n,1} = y$, it follows that
\begin{align*}
  \begin{bmatrix}
    F_{n,0} \\
    F_{n,1}
  \end{bmatrix}
  = A^{n-1}
    \begin{bmatrix}
      F_{1,0} \\
      F_{1,1}
    \end{bmatrix}
  = P \Lambda^{n-1} P^{-1}
  \begin{bmatrix}
    x \\ y
  \end{bmatrix}.
\end{align*}
Using the fact $F_{n}(x,y) = F_{n,0}(x,y) + F_{n,1}(x,y)$
and simplifying the matrix equation above gives us
\begin{align*}
  F_{n}(x,y) 
    &= \frac{1}{8my}
      \parens*{y(3x+y+m)(x+3y+m) \lambda_2^{n-1} - 
      y(3x+y-m)(x+3y-m) \lambda_1^{n-1}
      }\\
    &= \frac{1}{2m} \parens*{
      \parens*{x^2 + 6xy + y^2 + m(x+y)} \lambda_{2}^{n-1}
      -
      \parens*{x^2 + 6xy + y^2 - m(x+y)} \lambda_{1}^{n-1}
    },
\end{align*}
as desired.
\end{proof}

\refinedUpperBound*

\begin{proof}
We start by using \cref{lem:refined_closed_form} to rewrite the closed-form
solution of $F_{n}(x,y)$ as
\begin{align*}
  F_{n}(x,y) = \frac{1}{2m} \parens*{
    \parens*{x^2 + 6xy + y^2} \parens*{\lambda_{2}^{n-1} - \lambda_{1}^{n-1}}
    + m(x+y)\parens*{\lambda_2^{n-1} + \lambda_{1}^{n-1}}
  }.
\end{align*}
Next, we observe that the eigenvalue $\lambda_1$
satisfies $\lambda_1 < 0$ and $\abs{\lambda_1} \le \lambda_2$.
Since $(x+y)^2 < m^2$, it follows that
$x+y - m = 2\lambda_1 < 0$.
Furthermore,  we have
$2\abs{\lambda_1} \le \abs{x+y} + \abs{-m} = 2\lambda_2$ by the triangle inequality.
Together these two properties imply that
\begin{align*}
  \lambda_2^{n-1} - \lambda_1^{n-1} \le 2\lambda_{2}^{n-1}
  \hspace{0.75cm}\text{and}\hspace{0.75cm}
  \lambda_2^{n-1} + \lambda_1^{n-1} \le 2\lambda_{2}^{n-1}.
\end{align*}
Therefore, we can upper bound $F_{n}(x,y)$ by
\begin{align*}
  F_{n}(x,y) &\le \frac{1}{m}\parens*{x^2 + 6xy + y^2} \lambda_{2}^{n-1} + (x+y) \lambda_{2}^{n-1}.
\end{align*}
Since $x^2 + 6xy + y^2 < m^2$, we have the inequalities
\begin{align*}
  \frac{1}{m}(x^2 + 6xy + y^2) &< \sqrt{x^2 + 6xy + y^2} 
  < \sqrt{(2x + 2y)^2} = 2(x+y).
\end{align*}
The result follows from the definition of $\lambda_2$.
\end{proof}

\section{Tail Behavior of Correlated Random Walks}
\label{sec:tail-behavior}

In this section we prove~\Cref{lem:correlated_tail_bound}, which gives an
exponentially small upper bound for the tail of a correlated random walk as a
function of its momentum parameter $\mu$.  Our proof builds off of the PMF for
the position of a correlated random walk restated below, which is combinatorial
in nature and not readily amenable for tail inequalities.
Specifically, the probability $\Prob{S_{2n}=2m}$ is a sum of marginals
conditioned on the number of turns that the walk
makes~\cite{renshaw1981correlated}.

\exactPos*

There are two main ideas in our approach to develop a more useful bound for
the position  of a correlated random walk $\Prob{S_{2n}=2m}$.  First, we
construct a smooth function that upper bounds the marginals as a function of
$x$ (a continuation of the number of turns in the walk $k$),
and then we determine its maximum value.
Next we show that the log of the maximum value is asymptotically equivalent to
$m^2/(\mu n)$ for $m = o(n)$, which gives us desirable bounds for
sufficiently large values of~$n$.
We note that our analysis illustrates precisely how correlated random walks
generalize simple symmetric random walks and how the momentum parameter
$\mu$ controls the exponential decay.

\subsection{Upper Bounding the Marginal Probabilities}

We start by using Stirling's approximation to construct a smooth function
that upper bounds the marginal terms in the sum of the PMF for correlated
random walks.
For $x \in (0, n-m)$, let
\begin{align}
\label{eqn:f_def}
  f(x) \DEF
  \begin{cases}
    1 & \text{if $x=0$,}\\
  \frac{(n+m)^{n+m}}{x^x (n+m-x)^{n+m-x}} \cdot
  \frac{(n-m)^{n-m}}{x^x (n-m-x)^{n-m-x}} \cdot  \mu^{-2x} & \text{if $x \in (0,n-m)$,}\\
    \mu^{-2(n-m)} & \text{if $x = n-m$}.
  \end{cases}
\end{align}
It can easily be checked that $f(x)$ is continuous on all of $[0,n-m]$
using the fact that $\lim_{x\rightarrow 0} x^x = 1$.

\begin{lemma}
For any integer $m \ge 0$, a correlated random walk satisfies
\label{lem:proxy_upperbound}
\[
  \Prob{S_{2n}=2m} \le \poly(n) \sum_{k=0}^{n-m} \parens*{\frac{\mu}{1+\mu}}^{2n} f(k).
\]
\end{lemma}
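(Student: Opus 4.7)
The plan is to transform the exact combinatorial expression in Lemma~\ref{lem:exact_pos} into a smooth exponential bound via Stirling's approximation. The starting point is the reparameterization $p = \mu/(1+\mu)$, $1-p = 1/(1+\mu)$, which collapses the pure exponential dependence on $n$:
\[
  (1-p)^{2k-1}\, p^{2n-1-2k} \;=\; \parens*{\frac{\mu}{1+\mu}}^{2n} \cdot \frac{(1+\mu)^2}{\mu^{2k+1}}.
\]
The constant $(1+\mu)^2/\mu$ and the $O(n)$ postfactor $\frac{n(1-p)+k(2p-1)}{k}$ from Lemma~\ref{lem:exact_pos} can both be pulled outside the sum and absorbed into the eventual $\poly(n)$ prefactor, leaving only a product of binomials and $\mu^{-2k}$ inside the summand.

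The heart of the argument is then to apply the two-sided Stirling bound $\sqrt{2\pi n}(n/e)^n \le n! \le e\sqrt{n}(n/e)^n$ to each binomial. For $k$ strictly interior to $[1,n-m-1]$, this gives
\[
  \binom{n+m-1}{k-1} \le \poly(n)\cdot\frac{(n+m)^{n+m}}{k^{k}(n+m-k)^{n+m-k}}, \qquad
  \binom{n-m-1}{k-1} \le \poly(n)\cdot\frac{(n-m)^{n-m}}{k^{k}(n-m-k)^{n-m-k}},
\]
where the $\sqrt{\cdot}$ corrections together with the shifts $n+m-1 \to n+m$ and $k-1 \to k$ contribute only polynomial overhead. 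Multiplying these two estimates and the factor $\mu^{-2k}$ together reproduces the middle branch of $f(k)$ exactly, up to a $\poly(n)$ factor.

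The boundary of the sum needs a bit more care. The index $k=0$ is not present in the formula of Lemma~\ref{lem:exact_pos}, but may be freely included in the upper bound since $(\mu/(1+\mu))^{2n}\cdot f(0) = (\mu/(1+\mu))^{2n}$ is polynomially small and adds no essential contribution. The endpoint $k=n-m$ is more subtle because $(n-m-k)^{n-m-k}$ becomes $0^0$; I would handle this by adopting the convention $0^0 = 1$ so the Stirling bound on $\binom{n+m-1}{2m}$ collapses into the extended form of $f(n-m)$ up to a polynomial factor. Finally, the degenerate case $2m = 2n$ gives $\Prob{S_{2n}=2n} = \tfrac12 p^{2n-1} = \tfrac{1+\mu}{2\mu}\cdot(\mu/(1+\mu))^{2n}$, trivially dominated by the single term $f(0) = 1$ in the sum times a constant factor.

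The main obstacle is not any single calculation but the careful bookkeeping of the polynomial Stirling corrections across the interior regime and the two boundary indices, ensuring that they all collapse into a single $\poly(n)$ prefactor. The conceptually important move is the replacement of discrete binomials by the smooth function $f$, because this is what enables the subsequent analytic optimization in Section~\ref{sec:tail-behavior} that ultimately produces the Gaussian-type tail bound of Lemma~\ref{lem:correlated_tail_bound}.
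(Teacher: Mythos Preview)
Your proposal is correct and follows essentially the same approach as the paper: reparameterize $p=\mu/(1+\mu)$, absorb the rational postfactor and constants into a $\poly(n)$ prefactor, shift the binomial indices, and apply Stirling's approximation to recover the middle branch of $f$. Your treatment of the boundary indices $k=0$ and $k=n-m$ is in fact more explicit than the paper's, which simply concludes with ``the proof follows from the definition of $f(x)$.''
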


\begin{proof}
Consider the probability density function for $\Prob{S_{2n}=2m}$
in \Cref{lem:exact_pos}.
If $2m=2n$ the claim is clearly true, so we
focus on the other case.
We start by bounding the rightmost polynomial term in the sum.
For all $n \ge 1$, we have
\begin{align*}
  \frac{n(1-p)+k(2p-1)}{k} \le 2n.
\end{align*}
Next, we reparameterize the marginals in terms of $\mu$,
where $p = \mu/(1+\mu)$,
and use a more convenient upper bound for the binomial coefficients.
Observe that
\begin{align*}
  \Prob{S_{2n}=2m} &\le
    2n\sum_{k=1}^{n-m} \binom{n+m-1}{k-1}\binom{n-m-1}{k-1}
      \parens*{\frac{1}{1+\mu}}^{2k-1} \parens*{\frac{\mu}{1+\mu}}^{2n-1-2k}\\
  &\le \poly(n) 
    \sum_{k=0}^{n-m} \binom{n+m}{k}\binom{n-m}{k}
    \parens*{\frac{\mu}{1+\mu}}^{2n}
    \mu^{-2k}.
\end{align*}
Stirling's approximation states that for all $n \ge 1$ we have
\[
  e \parens*{\frac{n}{e}}^n \le n! \le en\parens*{\frac{n}{e}}^n,
\]
so we can bound the products of binomial coefficients
up to a polynomial factor by
\begin{align*}
  \binom{n+m}{k}\binom{n-m}{k} 
    &\le \text{poly}(n) \cdot
    \frac{\parens*{\frac{n+m}{e}}^{n+m}}{\parens*{\frac{k}{e}}^k \parens*{\frac{n+m-k}{e}}^{n+m-k} }
    \cdot
    \frac{\parens*{\frac{n-m}{e}}^{n-m}}{\parens*{\frac{k}{e}}^k \parens*{\frac{n-m-k}{e}}^{n-m-k} }\\
  &= \text{poly}(n) \cdot
  \frac{(n+m)^{n+m}}{k^k (n+m-k)^{n+m-k}} \cdot
  \frac{(n-m)^{n-m}}{k^k (n-m-k)^{n-m-k}}.
\end{align*}
The proof follows the definition of $f(x)$ given in \Cref{eqn:f_def}.
\end{proof}

There are polynomially-many marginal terms in the sum of the PMF, so
if the maximum term is exponentially small, then 
the total probability is exponentially small.
Since the marginal terms are bounded above by an expression involving $f(x)$,
we proceed by maximizing $f(x)$ on its support.

\begin{lemma}
\label{lem:critical_point}
The function $f(x)$ is maximized at the critical point
\begin{align*}
    x^* = \begin{cases}
      \frac{n^2 - m^2}{2n} & \text{if $\mu = 1$,}\\
      \frac{n}{1-\mu^2}\parens*{1-\sqrt{\mu^2+(1-\mu^2)\frac{m^2}{n^2}}} & \text{otherwise.}
    \end{cases}
\end{align*}
\end{lemma}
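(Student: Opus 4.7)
The plan is to locate the maximizer of $f$ on $(0,n-m)$ by a standard first-order analysis of $\log f$, and then to promote the interior critical point to a global maximum via strict log-concavity. Since $f$ is positive on the open interval, differentiating $\log f$ is more convenient than differentiating $f$ directly.

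A direct computation using $\tfrac{d}{du}[u\log u]=\log u+1$ gives (after the constant terms from the product rule cancel)
\[
  \frac{d}{dx}\log f(x) = -2\log x + \log(n+m-x) + \log(n-m-x) - 2\log\mu.
\]
Setting this derivative to zero and exponentiating yields the first-order condition $(n+m-x)(n-m-x) = \mu^2 x^2$, which expands to the quadratic
\[
  (1-\mu^2)x^2 - 2nx + (n^2-m^2) = 0.
\]
When $\mu=1$ this collapses to a linear equation with the unique solution $x^* = (n^2-m^2)/(2n)$, matching the first branch of the claim. For $\mu\neq 1$, the quadratic formula, after simplifying the discriminant to $4n^2(\mu^2+(1-\mu^2)m^2/n^2)$, produces
\[
  x = \frac{n}{1-\mu^2}\left(1 \pm \sqrt{\mu^2 + (1-\mu^2)\tfrac{m^2}{n^2}}\right);
\]
selecting the minus sign recovers the stated $x^*$.

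To certify that $x^*$ is a global maximum on $[0,n-m]$, I would compute the second derivative
\[
  \frac{d^2}{dx^2}\log f(x) = -\frac{2}{x} - \frac{1}{n+m-x} - \frac{1}{n-m-x},
\]
which is strictly negative on $(0,n-m)$, so $\log f$, and hence $f$, is strictly log-concave there. Strict concavity together with the existence of an interior critical point immediately implies via the supporting-line inequality that $x^*$ is the unique maximizer over the closed interval.

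The main obstacle I anticipate is not the calculus but the bookkeeping of signs needed to justify picking the minus-sign root. This reduces to verifying that $1 - \sqrt{\mu^2 + (1-\mu^2)m^2/n^2}$ has the same sign as $1-\mu^2$ whenever $m<n$, so that the ratio remains positive in both regimes $\mu<1$ and $\mu>1$, while the plus-sign root either exceeds $n$ (when $\mu<1$) or becomes negative (when $\mu>1$) and thus lies outside the admissible interval $(0,n-m)$.
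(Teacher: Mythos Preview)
Your proposal is correct and follows essentially the same approach as the paper: both analyze $\log f$, compute the same first derivative, solve the resulting quadratic $(n-x)^2 - m^2 = \mu^2 x^2$, and use the negativity of $(\log f)'' = -2/x - 1/(n+m-x) - 1/(n-m-x)$ to pass from the interior critical point to a global maximum. The only notable difference is that you explicitly address the sign analysis for selecting the minus-sign root, which the paper omits; this is a welcome addition rather than a departure in method.
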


\begin{proof}
We start by showing that $f(x)$ is log-concave on $(0,n-m)$, which
implies that it is unimodal. It follows that
a local maximum of $f(x)$ is a global maximum.
Since $n$ and $k$ are fixed as constants and because the numerator is positive,
it is sufficient to show that
\begin{align*}
  g(x) &= -\log\parens*{x^x (n+m-x)^{n+m-x} \cdot x^x (n-m-x)^{n-m-x} \cdot \mu^{2x}}\\
  &= - \parens*{2x \log(\mu x) + (n+m-x)\log(n+m-x) + (n-m-x)\log(n-m-x)}
\end{align*}
is concave.
Observe that the first derivative of $g(x)$ is
\begin{align*}
  g'(x) &= -2\parens*{1+\log\parens*{\mu x}} + \parens*{1+\log(n+m-x)}
    + \parens*{1 + \log\parens*{n-m-x}}\\
    &= -2\log\parens*{\mu x} + \log\parens*{n+m-x} + \log\parens*{n-m-x},
\end{align*}
and the second derivative is
\begin{align*}
  g''(x) &= -\frac{2}{x} - \frac{1}{n+m-x} - \frac{1}{n-m-x}.
\end{align*}
Because $g''(x) < 0$ on $(0,n-m)$, the function
$f(x)$ is log-concave and hence unimodal.

To identify the critical points of $f(x)$, it suffices to
determine where $g'(x) = 0$ since $\log x$ is increasing.
Using the previous expression for $g'(x)$, it follows that
\begin{align}
\label{eqn:first_derivative}
  g'(x) &= \log\bracks*{\frac{(n-x)^2 - m^2}{\mu^2 x^2}}.
\end{align}
Therefore, the critical points are the solutions of
$(n-x)^2 - m^2 = \mu^2 x^2$,
so we have
\begin{align*}
  x^* = \begin{cases}
    \frac{n^2 - m^2}{2n} & \text{if $\mu = 1$,}\\
    \frac{n - \sqrt{n^2 - (1-\mu^2)(n^2 - m^2)}}{1-\mu^2} & \text{otherwise.}
  \end{cases}
\end{align*}
It remains and suffices to show that $x^*$ is a local maximum since
$f(x)$ is unimodal.
Observing that
\begin{align*}
  \frac{\partial}{\partial x} \log f(x) = g'(x)
\end{align*}
and differentiating $f(x)=\exp(\log f(x))$ using the chain rule, the definition
of $x^*$ gives
\begin{align*}
  f''\parens*{x^*} &= e^{\log f\parens*{x^*}}
    \bracks*{g''\parens*{x^*} + 
      g'\parens*{x^*}^2}\\
   &= f\parens*{x^*} g''\parens*{x^*}.
\end{align*}
We know $f(x^*)>0$, so 
$f''(x^*)$ has the same sign as $g''(x^*) < 0$.
Therefore, $x^*$ is a local maximum of~$f(x)$.
Using the continuity of $f(x)$ on $[0,n-m]$ and log-concavity, 
$f(x^*)$ is a global maximum.
\end{proof}

\begin{remark}
It is worth noting that for $m = o(n)$, the asymptotic behavior
of the critical point is continuous as a function of $\mu > 0$.
In particular, it follows from \Cref{lem:critical_point} that $x^* \sim n/(1+\mu)$.
\end{remark}

\subsection{Asymptotic Behavior of the Maximum Log Marginal}

Now that we have a formula for $x^*$, and hence an expression for $f(x^*)$,
we want to show that
\begin{align*}
  \parens*{\frac{\mu}{1+\mu}}^{2n} f\parens*{x^*} \le e^{-n^c},
\end{align*}
for some constant $c > 0$.
Because there are polynomially-many marginals in the sum, this leads to an
exponentially small upper bound for $\Prob{S_{2n} = 2m}$.
Define the \emph{maximum log marginal} to be
\begin{align}
\label{eqn:h_def}
  h(n) \DEF -\log\bracks*{\parens*{\frac{\mu}{1+\mu}}^{2n} f\parens*{x^*}}.
\end{align}
Equivalently, we show that $h(n) \ge n^c$ for sufficiently large $n$
using asymptotic equivalences.

\begin{lemma}
\label{lem:symmetric_h}
The maximum log marginal $h(n)$ can be symmetrically expressed as
\begin{align*}
  h(n) = \parens*{n+m}\log\bracks*{
    \parens*{\frac{1+\mu}{\mu}} \parens*{1 - \frac{x^*}{n+m}}}
  + \parens*{n-m}\log\bracks*{
    \parens*{\frac{1+\mu}{\mu}} \parens*{1 - \frac{x^*}{n-m}}}.
\end{align*}
\end{lemma}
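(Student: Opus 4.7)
The plan is to verify this identity by direct expansion, with the key simplification coming from the critical-point equation for $x^*$ derived in the proof of \Cref{lem:critical_point}. Substituting the definition of $f(x^*)$ from \Cref{eqn:f_def} into the definition of $h(n)$ in \Cref{eqn:h_def} gives
\[
h(n) = 2n\log\!\frac{1+\mu}{\mu} - (n+m)\log(n+m) - (n-m)\log(n-m) + 2x^*\log(\mu x^*) + (n+m-x^*)\log(n+m-x^*) + (n-m-x^*)\log(n-m-x^*).
\]
I would then reorganize the last four logarithmic terms by pairing each $(n\pm m - x^*)\log(n\pm m - x^*)$ with the corresponding $-(n\pm m)\log(n\pm m)$ to produce factors of $\log(1 - x^*/(n\pm m))$, leaving behind residual $-x^*\log(n+m-x^*) - x^*\log(n-m-x^*)$ terms.

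Next, I would invoke the critical-point equation. From \Cref{eqn:first_derivative}, setting $g'(x^*) = 0$ is equivalent to $(n+m-x^*)(n-m-x^*) = \mu^2 (x^*)^2$, which after taking logarithms yields
\[
\log(n+m-x^*) + \log(n-m-x^*) = 2\log(\mu x^*).
\]
Multiplying by $-x^*$ and substituting cancels the $+2x^*\log(\mu x^*)$ term exactly, leaving only
\[
h(n) = 2n\log\!\frac{1+\mu}{\mu} + (n+m)\log\!\parens*{1 - \frac{x^*}{n+m}} + (n-m)\log\!\parens*{1 - \frac{x^*}{n-m}}.
\]
Splitting $2n = (n+m) + (n-m)$ in the leading factor and distributing $\log((1+\mu)/\mu)$ across both terms gives the claimed symmetric expression.

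There is no real obstacle here: the computation is entirely algebraic and its only nontrivial ingredient is the critical-point identity, which has already been established. The reason to carry it out carefully is that the symmetric form of $h(n)$ is what will enable the subsequent asymptotic analysis of $h(n)$ for $m = o(n)$, since each of the two pieces is now in a form amenable to Taylor expansion of $\log(1-u)$ around $u = 0$ using $x^* \sim n/(1+\mu)$ from the remark following \Cref{lem:critical_point}.
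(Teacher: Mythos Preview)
Your proposal is correct and takes essentially the same approach as the paper: both are direct algebraic verifications whose only nontrivial ingredient is the critical-point identity $g'(x^*)=0$ (equivalently, $(n+m-x^*)(n-m-x^*)=(\mu x^*)^2$) used to kill the $x^*$-dependent terms. The paper organizes the computation by first grouping $h(n)$ into pieces with coefficients $n$, $m$, and $x^*$ and then matching against the target, whereas you group directly by $(n+m)$ and $(n-m)$; this is a cosmetic difference in bookkeeping, not a different argument.
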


\begin{proof}
Grouping the terms of $h(n)$ by factors of $n$, $m$ and $x^*$ gives
\begin{align*}
  &n \log\bracks*{\parens*{\frac{1+\mu}{\mu}}^2
      \frac{\parens*{n-x^*}^2 - m^2}{(n+m)(n-m)}
    }
  + m \log\bracks*{\frac{(n-m)\parens*{n+m-x^*}}{(n+m)\parens*{n-m-x^*}}}
  + x^* \log\bracks*{\frac{\parens*{\mu x^*}^2}{\parens*{n-x^*}^2-m^2}}.
\end{align*}
Using \Cref{eqn:first_derivative}, observe that the last term is
\begin{align*}
  x^* \log\bracks*{\frac{\parens*{\mu x^*}^2}{\parens*{n-x^*}^2-m^2}}
  = -x^* g'\parens*{x^*}
  = 0.
\end{align*}
The proof follows by
grouping the terms of the desired expression by factors of $n$ and $m$.
\end{proof}

The following lemma is the crux of our argument, as it presents an asymptotic
equality for the maximum log marginal in the PMF for correlated random walks.
We remark that we attempted to bound this quantity directly using Taylor
expansions instead of an asymptotic equivalence, and while this seems possible,
the expressions are unruly. Our asymptotic equivalence demonstrates
that second derivative information is needed, which makes
the earlier approach even more unmanageable.

\begin{lemma}
\label{lem:asymptotic_equality}
For any $\mu > 0$ and $m = o(n)$, the maximum log marginal satisfies
$h(n) \sim m^2/(\mu n)$.
\end{lemma}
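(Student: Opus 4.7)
The plan is to start from the symmetric form of $h(n)$ in the preceding lemma and perform a careful second-order Taylor expansion. Concretely, I will substitute the closed form of $x^*$ from \Cref{lem:critical_point} into
\[
h(n) = (n+m)\log\!\left[\tfrac{1+\mu}{\mu}\!\left(1 - \tfrac{x^*}{n+m}\right)\right] + (n-m)\log\!\left[\tfrac{1+\mu}{\mu}\!\left(1 - \tfrac{x^*}{n-m}\right)\right],
\]
expand each logarithm to second order, and show that after cancellation the leading-order term is exactly $m^2/(\mu n)$.

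First I would refine the remark that $x^* \sim n/(1+\mu)$ into a more precise asymptotic. For $\mu \neq 1$, writing $x^* = \frac{n}{1-\mu^2}\bigl(1-\sqrt{\mu^2+(1-\mu^2)m^2/n^2}\bigr)$ and Taylor expanding the square root using $\sqrt{\mu^2+\epsilon}=\mu+\epsilon/(2\mu)+O(\epsilon^2)$ with $\epsilon=(1-\mu^2)m^2/n^2$, I obtain the unified expansion
\[
x^* = \frac{n}{1+\mu} - \frac{m^2}{2\mu n} + O\!\left(\frac{m^4}{n^3}\right),
\]
which also agrees with the exact formula at $\mu=1$. This lets me handle both cases of \Cref{lem:critical_point} simultaneously.

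Next I would compute the arguments of the logarithms. Using $n(1-x^*/n) = \frac{\mu n}{1+\mu}+\frac{m^2}{2\mu n}+O(m^4/n^3)$, a short calculation yields
\[
\frac{1+\mu}{\mu}\!\left(1-\frac{x^*}{n\pm m}\right) = 1 \pm \frac{m}{\mu n} + \frac{(1-\mu)m^2}{2\mu^2 n^2} + O\!\left(\frac{m^3}{n^3}\right).
\]
Applying $\log(1+u) = u - u^2/2 + O(u^3)$ with $u = \pm\frac{m}{\mu n}+\frac{(1-\mu)m^2}{2\mu^2 n^2}$, the two $\frac{(1-\mu)m^2}{2\mu^2 n^2}$ terms and the $-u^2/2 = -\frac{m^2}{2\mu^2 n^2}$ terms combine to leave $\mp\frac{m^2}{2\mu n^2}$ in each logarithm. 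Multiplying by $n\pm m$ and summing, the odd-in-$m$ contributions cancel while the even parts add: the $\pm m/\mu$ from the leading term disappears, and the $\frac{m^2}{\mu n}$ contribution from the cross term $m\cdot(\pm\frac{m}{\mu n})$ combines with the $(n\pm m)\cdot(\mp\frac{m^2}{2\mu n^2}) = \mp\frac{m^2}{2\mu n} + O(m^3/n^2)$ contribution to produce $h(n) = \frac{m^2}{\mu n} + O(m^3/n^2)$.

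The main obstacle will be bookkeeping: the target $m^2/(\mu n)$ is two orders below the naive leading scale $m/\mu$ in each summand, so I must track every term through second order in both the expansion of $x^*$ and the expansion of $\log$, and verify that the $O(m^3/n^2)$ remainders are genuinely $o(m^2/n)$ whenever $m = o(n)$. I would therefore organize the argument by first stating the refined expansion of $x^*$ as a short sublemma with an explicit error bound, then handling the two logarithms in parallel so that the sign-based cancellations are visible, and finally checking that the remaining error is dominated by $m^2/n$ under the hypothesis $m = o(n)$.
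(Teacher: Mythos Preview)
Your approach is correct and reaches the result, but by a different route than the paper. The paper substitutes $y=m/n$, splits into the cases $\mu=1$ and $\mu\ne 1$ (since the closed form of $x^*$ in \Cref{lem:critical_point} differs), and applies L'H\^opital's rule twice to the ratio $h(n)/(\mu^{-1}y^2)$. Your direct second-order Taylor expansion is morally the same computation---two applications of L'H\^opital extract the $y^2$ coefficient---but your refined expansion $x^* = n/(1+\mu) - m^2/(2\mu n) + O(m^4/n^3)$ handles all $\mu>0$ at once and avoids the case split. The paper's route, on the other hand, sidesteps most of the error-tracking you correctly identify as the main obstacle, at the price of computing some unpleasant derivatives.

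One slip in your bookkeeping: the second-order piece of $\log A_\pm$ is $-\tfrac{m^2}{2\mu n^2}$ with the \emph{same} sign in both branches, not $\mp\tfrac{m^2}{2\mu n^2}$, since $\tfrac{(1-\mu)m^2}{2\mu^2 n^2} - \tfrac{m^2}{2\mu^2 n^2} = -\tfrac{m^2}{2\mu n^2}$ independently of $\pm$. Taken literally, your $\mp$ would make the two $n\cdot\bigl(\mp\tfrac{m^2}{2\mu n^2}\bigr)$ contributions cancel, leaving $2m^2/(\mu n)$ rather than $m^2/(\mu n)$. With the correct sign, the cross terms $(\pm m)\cdot(\pm\tfrac{m}{\mu n})$ contribute $2m^2/(\mu n)$ and the two $n\cdot\bigl(-\tfrac{m^2}{2\mu n^2}\bigr)$ terms contribute $-m^2/(\mu n)$, giving the stated $h(n)=m^2/(\mu n)+O(m^3/n^2)$.
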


\begin{proof}
The proof is by case analysis for $\mu$. In both cases we 
analyze $h(n)$ as expressed in \Cref{lem:symmetric_h}, consider
a change of variables, and use L'Hospital's rule twice.
In the first case, we assume $\mu = 1$.
The value of $x^*$ in \Cref{lem:critical_point} gives us
\begin{align*}
  1 - \frac{x^*}{n+m} &= \frac{2n(n+m) - \parens*{n^2 - m^2}}{2n(n+m)} = \frac{n+m}{2n}\\
  1 - \frac{x^*}{n-m} &= \frac{2n(n-m) - \parens*{n^2 - m^2}}{2n(n-m)} = \frac{n-m}{2n}.
\end{align*}
It follows that $h(n)$ can be simplified as
\begin{align*}
  h(n) &= n\log\bracks*{\parens*{\frac{1+\mu}{\mu}}^2 \parens*{\frac{n^2-m^2}{4n^2}}}
   + m\log\parens*{\frac{n+m}{n-m}}\\
   &= n\log\parens*{1 - \frac{m^2}{n^2}} + m \log\parens*{1 + \frac{2m}{n-m}}.
\end{align*}

To show $h(n)\sim m^2/n$, by the definition of asymptotic equivalence
we need to prove that
\begin{align*}
  \lim_{n\rightarrow\infty} \frac{n\log\parens*{1 - \frac{m^2}{n^2}} + m \log\parens*{1 + \frac{2m}{n-m}}}{\frac{m^2}{n}} = 1.
\end{align*}
Make the change of variables $y=m/n$. Since $m = o(n)$, this is equivalent to showing 
\begin{align*}
  \lim_{y \rightarrow 0} \frac{\log\parens*{1 - y^2} + y \log\parens*{1 + \frac{2y}{1-y}}}{y^2} = 1.
\end{align*}
Using L'Hospital's rule twice with the derivatives
\begin{align*}
  \frac{\partial}{\partial y} \bracks*{\log\parens*{1 - y^2} + y \log\parens*{1 + \frac{2y}{1-y}}} &=
  \log\parens*{-\frac{y+1}{y-1}}\\
  \frac{\partial^2}{\partial y^2} \bracks*{\log\parens*{1 - y^2} + y \log\parens*{1 + \frac{2y}{1-y}}} &=
  \frac{2}{1-y^2},
\end{align*}
it follows that
\begin{align*}
  \lim_{y \rightarrow 0} \frac{\log\parens*{1 - y^2} + y \log\parens*{1 + \frac{2y}{1-y}}}{y^2} 
  &=
  \lim_{y \rightarrow 0} \frac{\log\parens*{-\frac{y+1}{y-1}}}{2y} 
  = 
  \lim_{y \rightarrow 0} \frac{\frac{2}{1-y^2}}{2}
  = 1.
\end{align*}
This completes the proof for $\mu = 1$.

The case when $\mu \ne 1$ is analogous but messier.
Making the same change of variables $y=m/n$, it is equivalent to show that
\begin{align}
\label{eqn:biased_goal}
  &(1+y)\log\bracks*{\parens*{\frac{1+\mu}{\mu}}
  \parens*{1 - \frac{1}{1-\mu^2}\cdot\frac{1}{1+y}
  \cdot \parens*{1 - \sqrt{\mu^2 + \parens*{1-\mu^2}y^2}}}
  } \nonumber \\
  &\hspace{1.25cm}+ (1-y)\log\bracks*{\parens*{\frac{1+\mu}{\mu}}
  \parens*{1 - \frac{1}{1-\mu^2}\cdot\frac{1}{1-y}\cdot
  \parens*{1 - \sqrt{\mu^2 + \parens*{1-\mu^2}y^2}}}
  }
  \sim \mu^{-1} y^2,
\end{align}
because the value of $x^*$ for $\mu \ne 1$ in \Cref{lem:critical_point} gives us
\begin{align*}
  1 - \frac{x^*}{n+m}
  &=
  1 - \frac{1}{n+m}\cdot\frac{n}{1-\mu^2}\cdot\parens*{1 - \sqrt{\mu^2+\parens*{1-\mu^2}\frac{m^2}{n^2}}}.
\end{align*}
Denoting the left-hand side of \Cref{eqn:biased_goal} by $g(y)$,
one can verify that the first two derivatives of $g(y)$ are
\begin{align*}
  g'(y) &= \log\parens*{\frac{\mu^2 - \sqrt{\mu^2-\mu^2y^2 + y^2} + \parens*{\mu^2-1}y}{(\mu-1)\mu(y+1)}}
    - \log\parens*{\frac{-\mu^2 + \sqrt{\mu^2-\mu^2y^2 + y^2} + \parens*{\mu^2-1}y}{(\mu-1)\mu(y-1)}}\\
  g''(y) &= \frac{2}{(1+y)(1-y)\sqrt{y^2 - \mu^2\parens*{y^2 - 1}}}.
\end{align*}
Observing that $g(0) = g'(0) = 0$ due to convenient cancellations
and using L'Hospital's rule twice,
\begin{align*}
  \lim_{y\rightarrow 0} \frac{g(y)}{\mu^{-1} y^2}
  = 
  \lim_{y\rightarrow 0} \frac{g'(y)}{2\mu^{-1} y}
  =
  \lim_{y\rightarrow 0}  \frac{2}{(1+y)(1-y)\sqrt{y^2 - \mu^2\parens*{y^2-1}}} \cdot \frac{\mu}{2}
  = 1.
\end{align*}
This completes the proof for all cases of $\mu$.
\end{proof}

\CorrelatedTailBound*

\begin{proof}
For $n$ sufficiently large, 
the asymptotic equality for $h(n)$ in \Cref{lem:asymptotic_equality} gives us
\[
  h(n) \ge \parens*{1-\frac{\varepsilon}{2}} \frac{m^2}{\mu n}.
\]
It follows from our construction of $f(x)$ and the definition of the
maximum log marginal that
\begin{align*}
  \Prob{S_{2n} = 2m} &\le \poly(n) \cdot \parens*{\frac{\mu}{1+\mu}}^{2n} f\parens*{x^*}\\
  &= \poly(n) \cdot e^{-h(n)}\\
  &\le \poly(n) \cdot e^{-\parens*{1-\frac{\varepsilon}{2}} \frac{m^2}{\mu n}}\\
  &\le e^{-\parens*{1-\varepsilon}\frac{m^2}{\mu n}},
\end{align*}
as desired.
\end{proof}

\section{Conclusion}
\label{sec:conclusion}

We have made significant progress towards rigorously establishing the
conjectured slow regions of the phase diagram for the six-vertex model.
In particular, we prove that there exist boundary conditions for which Glauber dynamics
requires exponential convergence time for the entire ferroelectric
region and most of the antiferroelectric region.  Furthermore, our proofs
demonstrate why sharp boundaries exist between the ferroelectric phase and the
disordered phase, where Glauber dynamics is believed to transition to
polynomial-time convergence.
We have not fully characterized the antiferroelectric phase,
but our improvement over the best previous bounds in~\cite{liu}
cover a significantly larger part of the region.

Our arguments for the slow mixing of Glauber dynamics completely break down in
the disordered phase, as expected, but there has not been any rigorous work
showing that in this region of the phase diagram we have fast convergence.
The single exception is the unweighted case when we have $a=b=c$, which corresponds to
Eulerian orientations of the lattice region. This was shown to converge in
polynomial time for all boundary conditions~\cite{rt, luby2001markov, goldberg2004random}.
The approaches in these works are inherently combinatorial, and it seems that
generalizing them to weighted cases will require significantly different ideas.
Lastly, we emphasize that our proofs of slow mixing rely on
new techniques for analyzing lattice models,
which include
the closed-form
generating function for weighted non-backtracking walks
derived in~\Cref{sec:antiferroelectric} and
the exponentially small tail inequality for correlated random
walks developed in~\Cref{sec:tail-behavior}.

\bibliographystyle{alpha}
\bibliography{references}

\end{document}